\newcommand{\kh}[1] {\left(#1\right)}
\newcommand{\zkh}[1] {\left[#1\right]}
\newcommand{\dkh}[1] {\left\{#1\right\}}
\newcommand{\innerprod}[1] {\ensuremath{\left\langle #1\right\rangle}}
\newcommand{\abs}[1]{\ensuremath{\left|#1\right|}}
\newcommand{\norm}[1]{\ensuremath{\left\|#1\right\|}}
\newcommand{\dist}[1]{\ensuremath{\operatorname{dist}\kh{#1}}}
\def\erf{\ensuremath{\operatorname{erf}}}
\def\x{\ensuremath{\bm{x}}}
\def\z{\ensuremath{\bm{z}}}
\def\a{\ensuremath{\bm{a}}}
\def\A{\ensuremath{\bm{A}}}
\def\b{\ensuremath{\bm{b}}}
\def\E{\ensuremath{{\operatorname{E}  }}}
\def\R{\ensuremath{\mathbb{R}}}
\def\C{\ensuremath{\mathbb{C}}}
\def\E{\ensuremath{\text{E}}}
\def\h{\ensuremath{\bm{h}}}
\newtheorem{thm}{Theorem}
\newtheorem{property}{Property}
\newtheorem{lem}{Lemma}
\begin{document}

\title{Phase Retrieval via Smooth Amplitude Flow}

\author{Qi Luo{\thanks{
		College of Science, National University of Defense Technology,
		Changsha, Hunan, 410073, P.R.China. Email: \texttt{luoqi10@nudt.edu.cn}}}
\and Hongxia Wang{\thanks{
		College of Science, National University of Defense Technology,
		Changsha, Hunan, 410073, P.R.China. Email: \texttt{wanghongxia@nudt.edu.cn}}
}
}

\maketitle

\begin{abstract}

Phase retrieval (PR) is an inverse problem about recovering a signal from phaseless linear measurements.
This problem can be effectively solved by minimizing a nonconvex amplitude-based loss function. However, this loss function is non-smooth.
To address the non-smoothness, a series of methods have been proposed by adding truncating, reweighting and smoothing operations to adjust the gradient or the loss function and achieved better performance. But these operations bring about extra rules and parameters that need to be carefully designed. Unlike previous works, we present a smooth amplitude flow method (SAF) which minimizes a novel loss function, without additionally modifying the gradient or the loss function during gradient descending. Such a new heuristic can be regarded as a smooth version of the original non-smooth amplitude-based loss function.
We prove that SAF can converge geometrically to a global optimal point via the gradient algorithm with an elaborate initialization stage with a high probability.
Substantial numerical tests empirically illustrate that the proposed heuristic is significantly superior to the original amplitude-based loss function and SAF also outperforms other state-of-the-art methods in terms of the recovery rate and the converging speed. Specially, it is numerically shown that
SAF can stably recover the original signal when number of measurements is smaller than the information-theoretic limit for both the real and the complex Gaussian models.
\end{abstract}

\section{Introduction}
\label{sec:intro}
In various science and engineering fields, one often encounters the problem of reconstructing a signal from phaseless measurements, known as the phase retrieval (PR) problem. Specific applications of PR include X-ray crystallography \cite{miao1999extending}, molecular imaging \cite{shechtman2015phase}, biological imaging \cite{stefik1978inferring} and astronomy \cite{fienup1987phase}.

Mathematically, PR is to solve a system of quadratic equations of the form:
\begin{equation}\label{eq:pr_problem}
  b_i=\abs{\innerprod{\a_i,\x  }},\quad i=1,\cdots,m,
\end{equation}
where $\x\in\mathbb{R}^n/\mathbb C^n$ is the unknown signal to be found, the measurements $\b := [b_i]_{1\leq i\leq m}\in \mathbb{R}^m$, and $\a_i\in\R^n/\C^n$ denotes the measuring vector, forming the $m\times n$ measuring matrix $\bm A := [\a_i]_{1\leq i\leq m}$. Because $e^{j\theta}\x$ with $j:=\sqrt{-1}$ also satisfies \eqref{eq:pr_problem} for all $\theta\in\R $, the uniqueness of the phase retrieval problem is defined up to a global phase.

$\A$ usually corresponds to the discrete Fourier transform in optics. To consider more general cases,
recent works focus more on the generic measurements. The most widely studied generic measuring model is the Gaussian model, i.e.
$\a_i$ sampled from independently and identically distributed (i.i.d.) $\mathcal{N}(0,\bm I_n)$ for the real Gaussian model, or $\mathcal{CN}(0,\bm I_n) = \mathcal{N}(0,\bm I_n/2)+ j\mathcal{N}(0,\bm I_n/2)$ for the complex Gaussian model.
Under this setting, it has been proved that $m$ should be at least $4n-4$ in complex case or $2n-1$ in real case to ensure uniqueness of solution $\x$. In this sense, $m=2n-1$ and $m=4n-4$ can be regarded as the information-theoretical limits in the real case and the complex case respectively for a PR problem to be uniquely solvable.


Although PR has a simple form and wide applications across many fields, solving it meets tremendous difficulties both theoretically and numerically since it has been proved to be NP-hard in general situations \cite{ben2001lectures_s}.

\subsection{Prior art}
The mainstream classical methods to solve PR are error-reduction algorithms including the Gerchberg-Saxton, hybrid input and output methods, based on constantly alternated projections. However, fundamental mathematical questions about the convergence of these methods still remained unsolved. Recently, a convex formulation of PR was found in \cite{phaselift2015} relying on the so-called matrix-lifting technique, and
several methods abbreviated as PhaseLift \cite{phaselift2015,candes2014plift}, PhaseCut \cite{waldspurger2015phase} and CoRK \cite{huang2016phase} were proposed based on this. Many solid guarantees about perfect recovery and convergence have been established for this convex approach, but its large computational complexity makes it unpractical when the signal dimension is large.
Another convex formulation was proposed in \cite{goldstein2018phasemax} via PhaseMax, which solved a linear program in the natural parameter space. However, PhaseMax is markedly uncompetitive with other state-of-the-art methods in terms of the empirical recovery rate.

More attention was paid to non-convex formulations directly instead of convex relaxation in recent years. Relevant works include Alternating Minimization (AltMin) \cite{netrapalli2013phase}, Wirtinger Flow (WF) \cite{candes2015phase}, Amplitude Flow method (AF) \cite{zhang2016reshaped} and their variants \cite{wang2017taf,raf2018,twf2016outlier}. Specifically, WF is a gradient descent method based on minimizing the following intensity-based loss function:
\begin{equation}\label{eq:wfloss}
  \ell_{\text{WF}}(\z):= \frac{1}{2m}\sum_{i=1}^{m} (|\innerprod{\a_i,\z}|^2-b_i^2)^2
\end{equation}
WF with spectral initialization method can recover the original perfectly from $\mathcal{O}(n\log n)$ measurements. Its two variants truncated WF (TWF) and reweighted WF reduce this number to $\mathcal{O}(n)$.
Zhang found that better performance can be obtained by minimizing the following amplitude-based loss function:
\begin{equation}\label{eq:afloss}
  \ell_{\text{AF}}(\z):= \frac{1}{2m}\sum_{i=1}^{m} (|\innerprod{\a_i,\z}|-b_i)^2,
\end{equation}
and this method is known as reshaped WF (RWF). RWF is shown to achieve perfect recovery from $\mathcal{O}(n)$ measurements,
which is better than the original WF. RWF is also known as AF since it minimizes the amplitude-based loss function. To further improve the performance of AF, Wang proposed two variants, truncated AF (TAF) and reweighted AF (RAF) which respectively adopt truncating and reweighting operation during gradient researching.
In terms of convergence speed and rate of recovery, TAF and RAF exhibit a superior performance over the state-of-the-art methods. However, TAF requires a carefully selected parameter for truncation procedure up the gradient function. The reweighting procedure in RAF also requires designing several parameters to obtain a desired performance.

The non-smooth absolute value term in \eqref{eq:afloss} can deteriorate the numerical performance of AF. To tackle this, Pinilla proposed a smoothing conjugate gradient method (PR-SCG) \cite{pinilla2018phase} that adjusts the loss function slightly to make the loss function smooth.
PR-SCG is a direct application of smoothing projected gradient method \cite{zhang2009smoothing} in the PR problem.
The critical technique of PR-SCG is replacing $\abs{x}$ with a smooth function $\sqrt{\abs{x}^2+\epsilon}$.
Then the loss function of PR-SCG becomes:
\begin{equation}\label{eq:scgloss}
  \ell_{\text{PR-SCG}}(\z):= \frac{1}{2m}\sum_{i=1}^{m}\kh{ (\abs{\innerprod{\a_i,\z}}^2
  +\epsilon)^{\frac12} -b_i} ^2.
\end{equation}
Obviously, the optimization problem in PR-SCG is not equivalent to the original problem. A well-selected $\epsilon$ and a diminishing rule for $\epsilon$ is needed to ensure that the adjusted loss function \eqref{eq:scgloss} converge to the original AF loss function \eqref{eq:afloss}.

With enough measurements, the state-of-the-art methods empirically achieve perfect recovery using $\mathcal{O}(n)$ measurements under random Gaussian settings. However, all these methods require at least the information-limit number of measurements to ensure this. How to further improve the rate of recovery under the information limit remains to be exploited.

\subsection{This work}
We construct a novel loss function, which is a natural smooth version of the original amplitude-base loss function. And SAF is proposed based on minimizing such novel heuristic by the gradient descent method with a delicate initialization.
This method is simple to implement as it does not need extra operations upon the gradient or loss function as many other state-of-the-art methods do. Theoretical analysis shows that SAF will converge to the global optimum geometrically given $m=\mathcal{O}(n)$ measurements. Numerical simulations show that our SAF approach performs better than the original AF and other state-of-the-art methods in respect with the sampling complexity and time cost.

The remainder of ths paper is organized as follows. In Section \ref{sec:method}, we propose the SAF algorithm. Section \ref{sec:theo} gives the theoretical analysis of the proposed method. In section \ref{sec:numr}, various experiments are implemented to compare SAF with other gradient descent solvers.

As regards notation used in this paper, the bold capital lowercase letters, e.g. $\x,\z$ denote represent vectors. The bold capital uppercase letters such as $\A$ represent matrices. $\x'$ denotes the conjugate transpose of $\x$. $\innerprod{\x,\bm y}$ denotes the inner product of vector $\x,\bm y$ calculated by $\innerprod{\x,\bm y} = \x'\bm y.$ \norm{\x} is the Euclidean norm. The cardinality of the set $\mathcal{I}$ is denoted by $\abs{\mathcal{I}}$.
The distance between two vectors up to a global phase is defined as
$
  \dist{\x_1,\x_2}:= \min_{\theta\in \mathbb{R}} \norm{e^{j\theta}\x_1 -\x_2 }.
$

\section{Smooth amplitude flow method}
\label{sec:method}
The intuition and the principles of SAF will be presented in detail in thsi section. For concreteness, only the real Gaussian model is analyzed. However, with the aid of Wirtinger derivative SAF can be easily applied to the complex model directly.
\subsection{The smooth amplitude-based loss function}
Similar to PR-SCG, we bring in the smooth function
 \begin{equation}\label{eq:smoothfunc}
    g_{k,\epsilon}(x)=\sqrt[k]{\abs{x}^k+\epsilon^k},
 \end{equation}
where $k\geq2,\epsilon>0$. Since only replacing $\abs{\a_i'\z}$ with $g_{k,\epsilon}(\a_i'\z)$ in the original loss function \eqref{eq:afloss} will lead to a loss function with different global optimum, we consider replacing the $b_i$ in \eqref{eq:afloss} with $g_{k,\epsilon}(b_i)$ symmetrically. In addition, we set $\epsilon$ to be the proportional to $b_i$. As a result, we obtain the following globally smooth loss function:
\begin{equation}\label{eq:safloss}
  \ell_{\text{SAF}}(\z):= \frac{1}{2m}\sum_{i=1}^{m}\ell_i(\z),\quad
  \ell_i(\z)= \kh{g_{k, \gamma b_i}(\a_i'\z) - g_{k,\gamma b_i}(b_i) } ^2,
\end{equation}
where $k\geq 2,\gamma>0$ are preselected parameters. The utilization of the smooth function \eqref{eq:smoothfunc} makes the original amplitude-based loss function \eqref{eq:afloss} smooth. Therefore, our method is called the smooth amplitude flow method, abbreviated as SAF. When $\gamma=0$, $\ell_{\text{SAF}}(\z)$ degenerates to the original non-smooth loss function \eqref{eq:afloss}. Obviously, the original AF loss function and the SAF loss function \eqref{eq:safloss} have the same global minimizers that satisfy
$\abs{\innerprod{\a_i,z}} = b_i$, for $i=1,\cdots,m$. The gradient of the loss function \eqref{eq:safloss} is
\begin{equation}\label{eq:safgrad}
  \nabla \ell_{\text{SAF}}(\z) = \frac{1}{m}\sum_{i=1}^{m} \kh{g_{k, \gamma b_i}(\a_i'\z) - g_{k,\gamma b_i}(b_i) } (|\a_i'\z|^k+\gamma^kb_i^k)^{\frac 1k-1}\abs{\a_i'\z}^{k-2}\a_i\a_i'\z.
\end{equation}
It is hard to analytically determine the best setting of parameters $k$ and $\gamma$. Therefore we simply take $k=4$ and $\gamma=1$ based on extensive numerical experiments.

An earlier work proposed a similar loss function method in view of adding perturbation \cite{gao2019solving}. However, our method is proposed from the view of utilizing a smooth alternative of $\abs{x}$ and is more generalized.
Besides \eqref{eq:smoothfunc}, SAF can also adopt other smooth alternative of $\abs{x}$ such as the well-known log-cosh function \cite{choo2018symmetries}.

SAF can be seen as a natural improved version of the original AF method. According to \cite{zhang2016reshaped},
amplitude-based loss function \eqref{eq:afloss} can be regarded as a direct application of traditional least squares method to phase retrieval, and shows to be better than WF \eqref{eq:wfloss}. However, jumps of gradient $\nabla\ell_i(\z)$ exist in the vicinity of the line $\dkh{\z:\a_i'\z=0}$, which may unstabilize the gradient descent algorithm, especially when the measurement number is around the information-theoretical limit. To tackle this, AF's variants TAF and RAF modifies the gradient function, and have achieved better empirical performance. The idea of this paper is similar, but we demonstrate that directly adjusting the loss function with a smooth function \eqref{eq:smoothfunc} can also be feasible.
\begin{figure}[hbt]
\centering
\subfloat[AF]{
\includegraphics[width=0.33\linewidth]{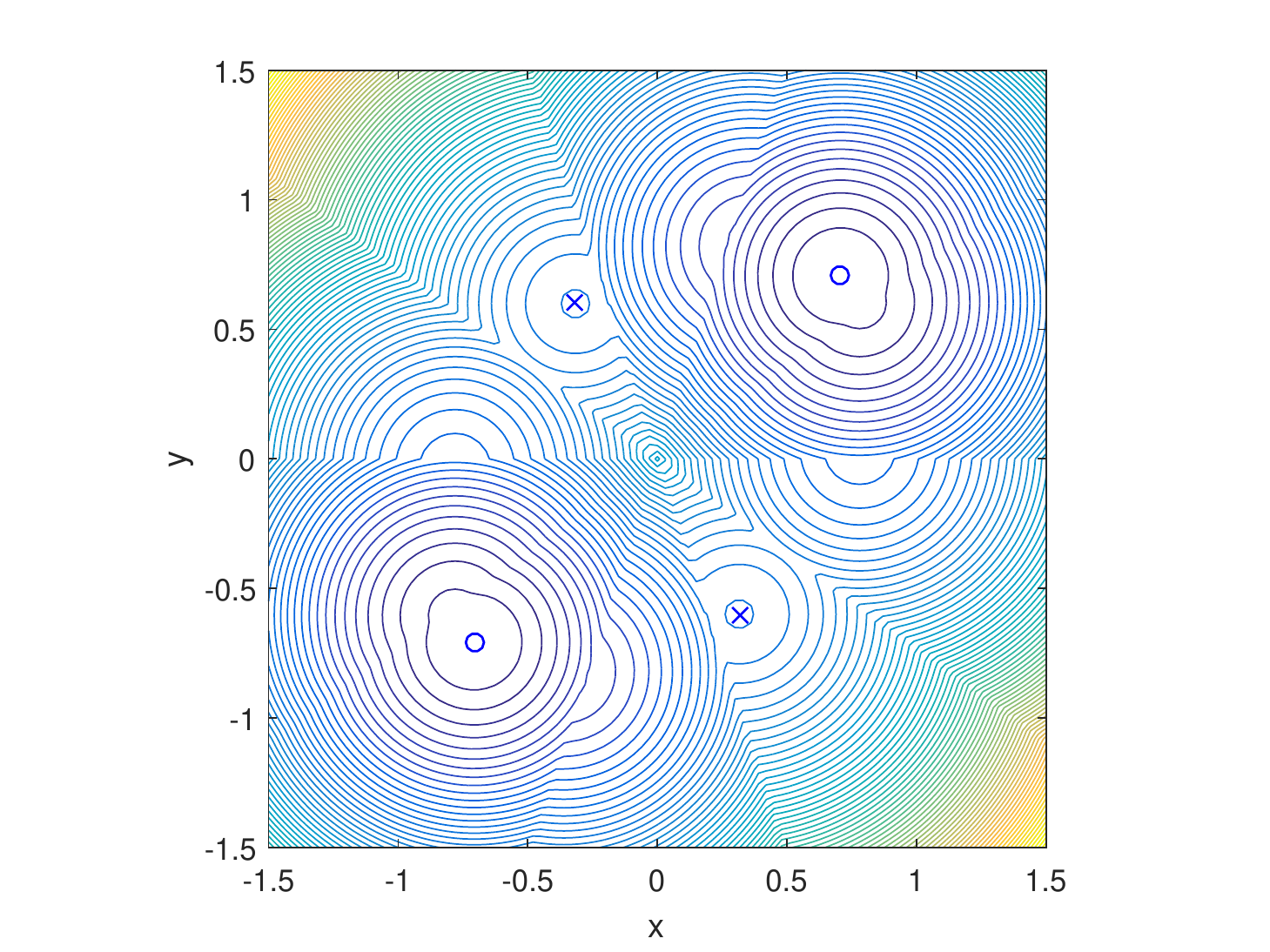}
\label{fig:af loss}
}
\subfloat[PR-SCG]{
\includegraphics [width=0.33\linewidth]{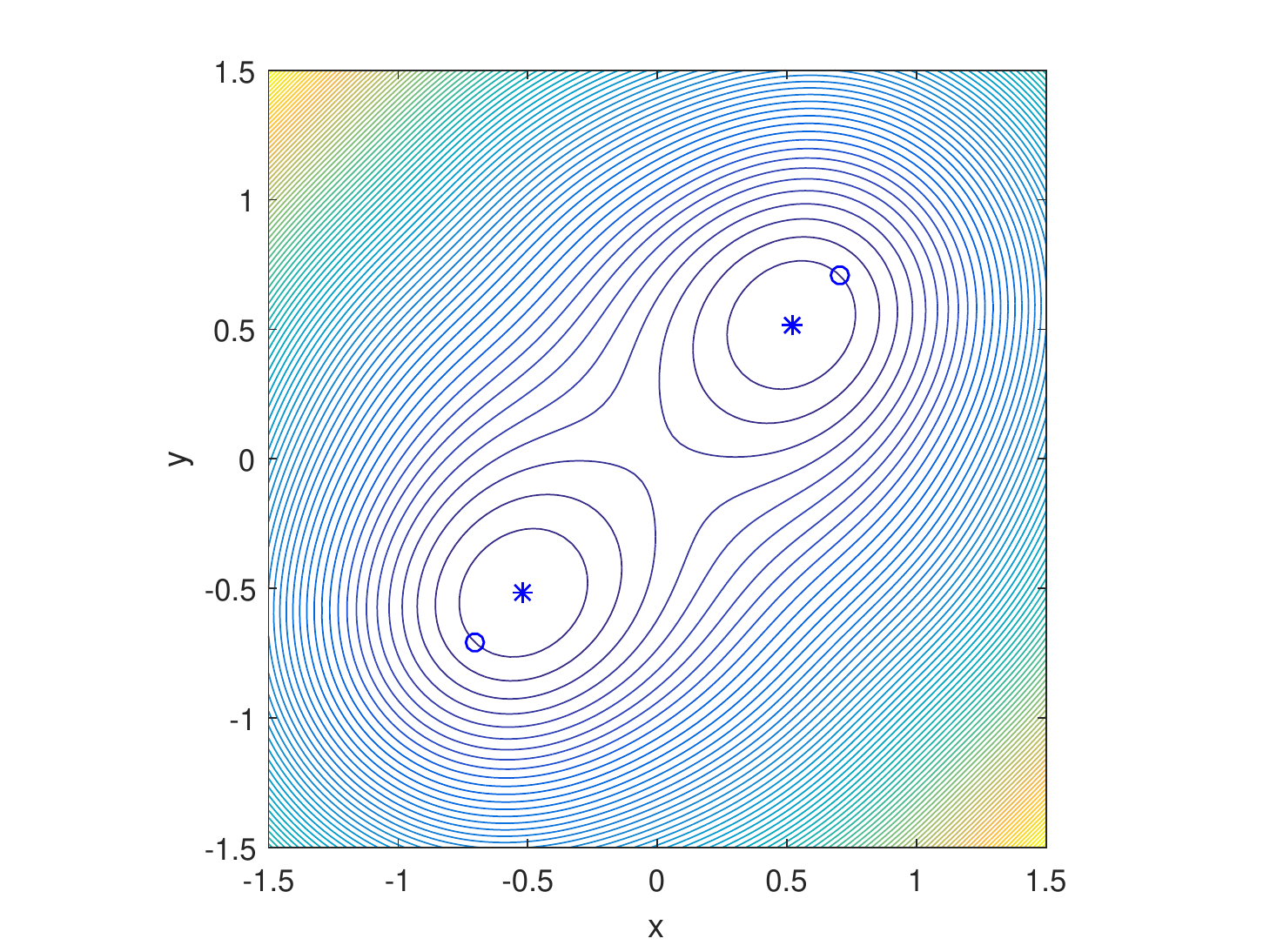}
\label{fig:scg loss}
}
\subfloat[SAF]{
\includegraphics [width=0.33\linewidth]{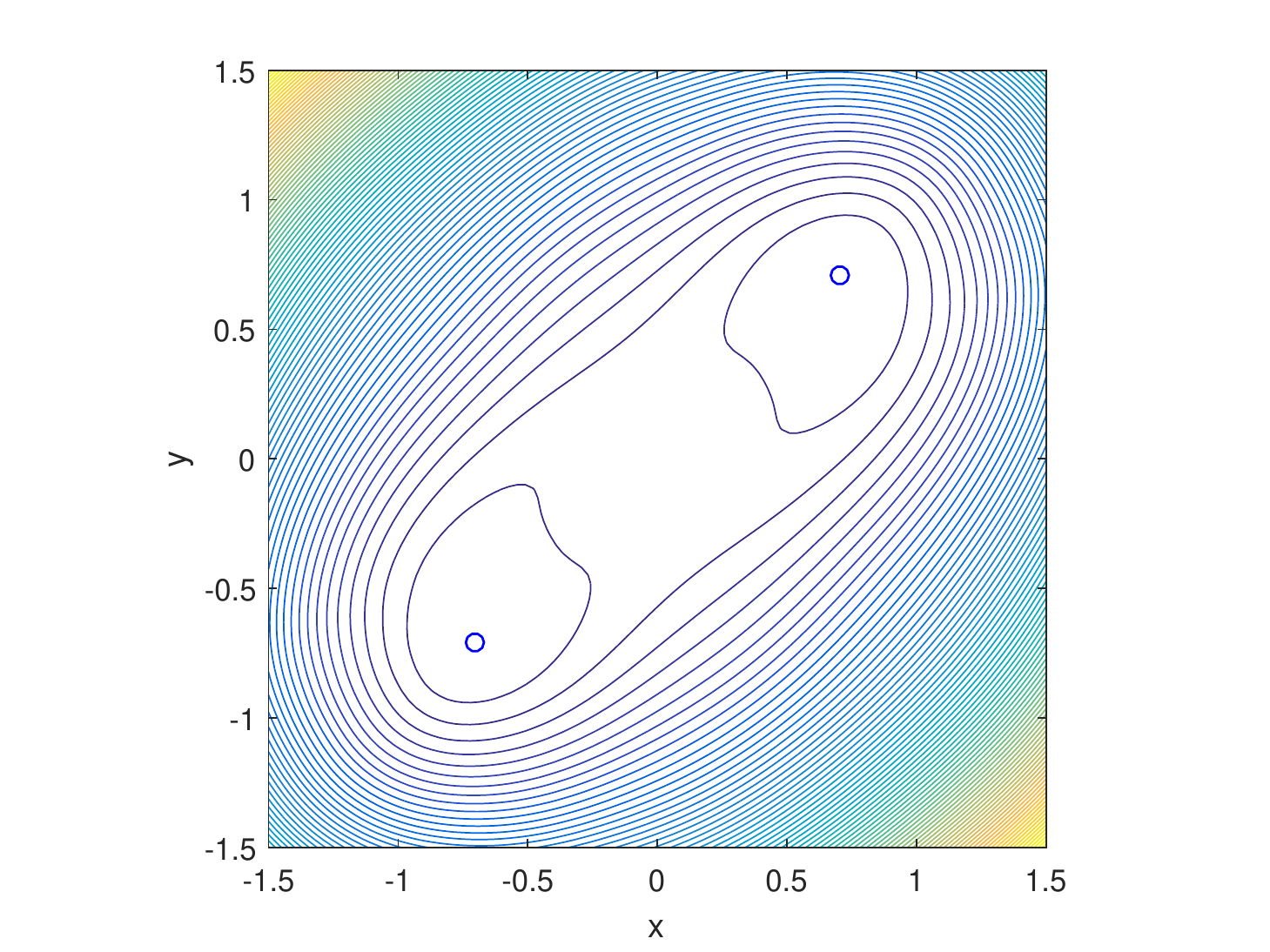}
\label{fig:saf loss}
}
\caption{Contour plot of loss functions. SAF has no local minimum while AF has another 2 spurious minima marked with crosses.}
\label{fig:intuition SAF better}
\end{figure}

Fig. \ref{fig:intuition SAF better} presents a two-dimensional real-valued PR example to demonstrate the difference between the loss functions of AF and SAF. The original solution is $\x=(\frac{\sqrt2}{2},\frac{\sqrt2}{2})'$, and the measuring vectors are $\a_i=(\sin\theta_i,\cos\theta_i)'$, where $\theta_i=\frac{i\pi}{5},i=0,\cdots,4$. It can be seen that two global minima marked with circles are surrounded by dense contours for AF and SAF as shown in Fig. \ref{fig:af loss} and  Fig. \ref{fig:saf loss}. However, there are two extra local minima marked with crosses for AF as shown in Fig. \ref{fig:af loss}. Furthermore, the contours of AF are not smooth or regular which can cause difficulties for some gradient-descending solvers, while SAF is infinitely differentiable which makes high order optimization methods applicable.

Fig. \ref{fig:scg loss} presents the contours of PR-SCG loss function with $\epsilon=0.3$ in \eqref{eq:scgloss}. The global optimal points is marked with `$*$' and obviously deviate from the true solution of PR. Therefore, an iteratively shrinking rule for $\epsilon$ is necessary for PR-SCG to ensure the convergency to the original solution. In contrast, the loss function of SAF has same global minima with \eqref{eq:afloss}.  Besides, current results only demonstrate that PR-SCG converges to some Clark stationary point, but the convergency to the global minima has not been established yet.

\subsection{Algorithm}
Since the loss function \eqref{eq:safloss} is non-convex, an elaborate initialization is needed to obtain a good initial estimate for global convergence. Because the initialization method is not the focus of this paper, we simply choose the weighted maximal correlation initialization method proposed by Wang \cite{raf2018}. This method first calculates $\eta = \sqrt{\sum_{i=1}^{m} b_i^2/m}$ as the estimate of $\norm{\x}$.
Then the direction of $\x$ is estimate by the leading eigenvector $\tilde{\z}$ of the matrix $\bm M := \sum_{i\in \mathcal{I}} \sqrt{b_i}\a_i\a_i' / \norm{\a_i}^2$, where $\mathcal{I}$ is the set of indices corresponding to the largest values of $b_i/\norm{\a_i}$.
Proposition 1 in \cite{raf2018} tells that the estimate ${\z_0} = \eta\tilde{z}/\norm{\tilde{z}}$ satisfying
\begin{equation}\label{eq:initial estimate}
  \dist{\z_0,{\x}} \leq \frac{1}{20} \norm{\x},
\end{equation}
with probability at least $1-C\exp(-c_1m)$, if $m\geq c\abs{\mathcal{I}} \geq c_2n$ for some constants $c_0,c_1,c_2,C$ and sufficiently large $n$.

We use the gradient descent method to search for the global minimizer from the above initializer; that is
\begin{equation}\label{eq:grad search}
  \z_{t+1}= \z_t - \mu_t \nabla \ell(\z_t),
\end{equation}
where $\mu_t$ is the learning rate. In the next section we show that a proper fixed $\mu_t$ suffices to ensure the global convergence under a good initialization. To obtain a faster convergence rate, we use the backtracking strategy to determine $\mu_t$ in numerical experiments. The details of SAF is presented in Algorithm \ref{alg:saf}.
\renewcommand{\algorithmicrequire}{\textbf{Input:}}
\renewcommand{\algorithmicensure}{\textbf{Output:}}
\begin{algorithm}
\caption{SAF: Smooth Amplitude Flow Method}
\begin{algorithmic}[1] %
\Require $\dkh{\a_i}_{i=1}^m,\dkh{b_i}_{i=1}^m$; maximum number of iterations $T$; step length ${\mu}$; backtracking parameters $\alpha,\beta$ and $s_{\max}$; truncation parameter $I$.
\State \textbf{Construct} $\mathcal{I}$ which includes indices corresponding to the $I$ largest entries among $\dkh{b_i}_{i=1}^m$.
\State \textbf{Initialize} $\z_0:= \sqrt{\frac1m\sum_{i=1}^{m} b_i^2}\,\tilde{\z}$, where $\tilde{\z}$ is the normalized leading eigenvector of $$\bm M:= \sum_{i\in\mathcal{I}} \sqrt{b_i}\frac{\a_i\a_i'}{\norm{\a_i}^2}$$
\For{$t = 0: T-1$}
    \State Compute gradient $\bm g_t = \nabla_{\text{SAF}}\ell(\z_t)$ according to \eqref{eq:safgrad}
    \State $s = 0$
    \While{$\ell_{\text{SAF}}\kh{\z_t-\beta^s \mu \bm g_t}> \ell_{\text{SAF}}(\z_t) -\alpha \beta^s \mu  \norm{\bm g_t}^2$ and $s < s_{\max}$   }
      \State $s=s+1$  \Comment{backtracking}
    \EndWhile
    \State $\mu_t =  \mu\beta^s$
    \State $\z_{t+1} = \z_t - \mu_t\bm g_t$
\EndFor
\Ensure $\z_T$
\end{algorithmic}
\label{alg:saf}
\end{algorithm}

\section{Theoretical guarantees for global convergence}
\label{sec:theo}
This section establishes the global convergence of Algorithm \ref{alg:saf} for the real Gaussian model. This proof can also be extended to the complex Gaussian naturally using the Wirtinger gradient. For simplicity, we write $\ell_{\text{SAF}}(\z)$ simply as $\ell(\z)$ in the following text.

The geometric convergence of SAF is characterized by the following theorem.

\begin{thm}\label{thm:grad converg}
  Consider the problem of finding arbitrary $\x\in\R^n$ from the phaseless measurements \eqref{eq:pr_problem} with real Gaussian measurement vectors. If $m\geq c_0 n $ and we adopt a fixed learning rate $\mu\leq \tilde{\mu}$, then with probability at least $1-C\exp(-c_1m)$, the SAF estimates $\z_t$ in Algorithm \ref{alg:saf} obey the geometric convergence:
  \begin{equation}\label{eq:geo conv}
    \dist{\z_t,\x} \leq \frac1{20}(1-\nu)^t \norm{\x}, \quad t=0,1,\cdots
  \end{equation}
  where $c_0,c_1,C>0$, $0<\nu<1$ and $\tilde{\mu}$ are some certain constants.
\end{thm}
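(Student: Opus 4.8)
The plan is to follow the by-now-standard two-stage template for nonconvex phase retrieval: first argue that the spectral initializer of Algorithm~\ref{alg:saf} already lies in a small basin around the truth, and then show that \emph{inside} that basin the smooth loss $\ell$ satisfies a local regularity (curvature) condition that forces the gradient iteration to contract geometrically. By Proposition~1 of \cite{raf2018}, recalled in \eqref{eq:initial estimate}, with probability at least $1-C\exp(-c_1 m)$ one has $\dist{\z_0,\x}\le\frac1{20}\norm{\x}$ provided $m\ge c_0 n$; so it suffices to work in the region $E:=\dkh{\z:\dist{\z,\x}\le\frac1{20}\norm{\x}}$ and to prove that the iterates stay in $E$ while shrinking.

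\textbf{Regularity condition $\Rightarrow$ geometric convergence.} Fixing the global phase (sign, in the real case) so that $\norm{\z-\x}=\dist{\z,\x}$, the key claim is a local regularity condition (RC): there exist constants $\mu_1,\lambda>0$ such that
\[
  \innerprod{\nabla\ell(\z),\z-\x}\ \ge\ \frac1{\mu_1}\norm{\nabla\ell(\z)}^2+\lambda\,\dist{\z,\x}^2
  \qquad\text{for every }\z\in E.
\]
Granting the RC, for any fixed step size $\mu_t=\mu\le\tilde\mu:=1/\mu_1$ one computes
\[
  \norm{\z_{t+1}-\x}^2=\norm{\z_t-\x}^2-2\mu\innerprod{\nabla\ell(\z_t),\z_t-\x}+\mu^2\norm{\nabla\ell(\z_t)}^2\le(1-2\mu\lambda)\norm{\z_t-\x}^2,
\]
which simultaneously shows $\z_{t+1}\in E$ and, by induction from $\z_0$, gives \eqref{eq:geo conv} with $1-\nu=\sqrt{1-2\mu\lambda}$.

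\textbf{Establishing the RC.} Since $\a_i\sim\mathcal N(0,\bm I_n)$, rotational invariance reduces the expectation of $\innerprod{\nabla\ell(\z),\z-\x}$ and of $\norm{\nabla\ell(\z)}^2$ to two-dimensional Gaussian integrals over $\operatorname{span}\{\z,\x\}$, using \eqref{eq:safgrad} with $b_i=\abs{\a_i'\x}$. The feature that is new relative to plain WF/AF is the weight $w_i:=(\abs{\a_i'\z}^k+\gamma^k b_i^k)^{1/k-1}\abs{\a_i'\z}^{k-2}$: because $k\ge2$ and because $\epsilon=\gamma b_i$ scales with $b_i$, this weight stays bounded near $\a_i'\z=0$, and for $\z$ close to $\x$ it is a small perturbation of the unit weight, so the strictly positive ``local strong convexity'' term carried by the amplitude loss survives while the indefinite cross terms are $O(\gamma)+O(\dist{\z,\x}/\norm{\x})$ and hence dominated on $E$. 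The real-case sign ambiguity is handled by checking the bound separately for $\z$ near $\x$ and for $\z$ near $-\x$. Passing from the expected bound to the RC for \emph{all} $\z\in E$ simultaneously is a standard covering-net argument: each summand of $\nabla\ell$ is a bounded weight times a sub-exponential quadratic form in $\a_i$, so a Bernstein inequality gives pointwise deviation $O(\sqrt{n/m})$ with probability $1-e^{-\Omega(m)}$, and the (infinite) smoothness of $\ell$---the whole point of SAF---makes $\z\mapsto\nabla\ell(\z)$ Lipschitz on $E$, so an $\epsilon$-net of size $e^{O(n)}$ upgrades the pointwise estimate to a uniform one once $c_0$ is taken large enough.

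\textbf{Main obstacle.} The delicate step is the expectation computation behind the RC: the weight $w_i$ couples $\a_i'\z$ and $\a_i'\x$ nonlinearly, so, unlike WF and RWF, the relevant $2$-D integrals no longer have clean closed forms, and one must instead use monotonicity and bracketing estimates to verify that for the fixed choice $k=4,\ \gamma=1$ the curvature term genuinely dominates throughout $E$ (and to pin down the admissible $\tilde\mu$ and the rate $\nu$). A secondary technical nuisance is keeping the moment bounds for $w_i$ uniform in $i$ when $b_i$ is small, which again relies on $k\ge2$ together with the $b_i$-proportional smoothing parameter.
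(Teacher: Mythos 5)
Your proposal is correct and follows essentially the same route as the paper: the initialization guarantee \eqref{eq:initial estimate} placing $\z_0$ in the ball $\dist{\z,\x}\le\frac1{20}\norm{\x}$, a local regularity condition on that ball, and the standard expansion of $\norm{\z_{t+1}-\x}^2$ yielding the geometric contraction \eqref{eq:geo converg}. The only difference is in how the RC is verified: instead of directly computing $\E[\norm{\nabla\ell(\z)}^2]$ and $\E[\innerprod{\nabla\ell(\z),\h}]$ as you suggest, the paper uses deterministic pointwise properties of the scalar weight $f$ (notably $|f(\a_i'\z,b_i)/(\a_i'\h)|\le1$, giving $\norm{\nabla\ell(\z)}\le(1+\delta)\norm{\h}$ via spectral concentration) together with a partition of indices by the ratio $|\a_i'\h|/|\a_i'\x|$, reducing everything to the indicator-weighted sums of Lemma 3 --- where the two-dimensional Gaussian integrals (evaluated numerically), the Bernstein bound, and the $\epsilon$-net argument you anticipate appear in exactly that form.
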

Combining Theorem \ref{thm:grad converg} and the error of initialization \eqref{eq:initial estimate}, SAF can recover the original signal given $\mathcal{O}(n)$ measurements. Moreover, starting from an elaborate initial estimate, $\mathcal{O}(\log{1/\epsilon})$ iterations suffice to give a solution with RMSE error less than $\epsilon.$ Combined with the per-iteration complexity $\mathcal{O}(mn)$ we conclude that SAF solves PR in time $\mathcal{O}(mn\log\frac1\epsilon)$, which is proportional to the time required by the processor to read the entire data $\dkh{\A;\b}$.

The proof of Theorem \ref{thm:grad converg} hinges on proving the \textit{local regularity condition} $\operatorname{RC}(\mu,\lambda,c)$, i.e.
\begin{equation}\label{eq:reg cond}
  \innerprod{\nabla \ell(\z),\bm h}\geq
  \frac{\mu}{2}\norm{\nabla \ell(\z)}^2 + \frac\lambda 2 \norm{\bm \h}^2
\end{equation}
for all $\z$ such that $\norm{\h}=\norm{\z-\x}\leq \epsilon\norm{\x}$ for some constant $0<\epsilon<1.$ As shown in \cite{chen2015solving}, the ball $\dkh{\z:\norm{\z-\x}\leq \epsilon\norm{\x}}$ can be seen as a basin of attraction towards the global optimum; once the initialization lands into this neighborhood, geometric convergence can be guaranteed, i.e.,
\begin{equation}\label{eq:geo converg}
  \dist{\z+\mu\nabla\ell(\z),\x}^2\leq
  (1-\mu\lambda)\dist{\z,\x}^2.
\end{equation}
Evidently, Theorem \ref{thm:grad converg} holds asthmatically once the $\text{RC}(\mu,\lambda,\epsilon)$ is proved.

Lemma \ref{lem:nabla l leq} and Lemma \ref{lem:nablal h geq} in \ref{sec:loc convg proof} respectively demonstrate that
\begin{equation}\label{eq:nabla l leq}
  \norm{\nabla \ell(\z)}\leq \kh{1+\delta}\norm{\h},
\end{equation}
and
\begin{equation}\label{eq:nablal h geq}
  \innerprod{\nabla\ell(\z),\h}\geq (0.07-\epsilon)\norm{\h}^2,
\end{equation}
hold with probability at least $1-C\exp(-c_1m)$ given $m\geq c_0n$.
Using the above two bounds we can reach the regularity condition if $\mu$ and $\lambda$ satisfy
\begin{equation}\label{eq:mu lambda}
  0.07-\epsilon\geq \frac{\mu}{2}(1+\delta)^2+\frac\lambda2,
\end{equation}
which indicates an upper bound $\mu\leq 2\times 0.07 = 0.14$, which suggests the range of the step size.
In practice, the step size $\mu$ can be significantly larger while still ensuring the global convergence, since several bound results in our proof can be further tighten with more delicate techniques.

%
%

\section{Numerical results}
\label{sec:numr}
This section presents several numerical experiments to verify our theoretical analysis. We also compare SAF with other state-of-the-art gradient descent methods including WF, RWF, TAF and RAF in terms of the empirical rate of recovery and the convergence rate. All the experiments were conducted with Matlab 2016a on a personal laptop with {Intel} Core i7 6820HQ. Not only the real-valued Gaussian model, but also the complex-valued Gaussian model and the CDP model are tested. In following experiments, the learning rate
$\mu=4$ and $7$ for the real model and the complex model respectively, $\alpha=0.4$, $\beta = 0.2$,  truncation parameter $I=\lfloor\frac {3m}{13}\rfloor $ in the initialization stage and  $s_{\max}=2$  for the backtracking parameters.
The maximum iteration number $T$ is set as 5000. We define the normalized mean-square error $\text{NMSE}:= \operatorname{dist}^2(\z,\x)/\norm{\x}^2$ for numerical comparison.
To avoid the influence of initialization methods, all tested algorithms are seeded with the same maximal correlation methods.

For readers to reproduce the numerical tests conveniently, the Matlab codes are available at \url{https://github.com/qiluo10/smooth-amplitude-flow}.

%

\subsection{Comparison of empirical success rate}
We compare SAF with other state-of-the-art gradient methods for the noiseless Gaussian model under varying $m/n$ in terms of empirical success rate. Each success rate are calculated over 100 independent trials and a trial is declared successful if NMSE of the returned result is smaller than $10^{-5}$.
The results are depicted in Fig. \ref{fig:cmp}.
It is shown that less number of measurements suffice for SAF to recover the true signal $\x$ in comparison with AF method and its variants in real case. In the complex case, SAF is second only to PC-SCG by a narrow gap. It should be pointed out that the convergence to global optimum has not been established for PR-SCG.
Particularly, Fig. \ref{fig:cmp} illustrates that SAF achieves a high success rate of over 95\% when $m/n\geq1.8$ and a perfect recovery when $m/n\geq1.9$ in the real case. In the complex case, both SAF and PR-SCG achieve a recovery rate larger than $95\%$ when $m/n\geq2.8$ and a $100\%$ recovery when $m/n\geq3$.

It is noteworthy that SAF is the only method that achieves a $100\%$ with a sampling complexity lower than the information-theoretical limit in both the real and complex cases.
Moreover, it is enlightening that SAF simply minimizing a novel loss function can outperform other existing algorithms with complicated operations upon the gradient or loss function.

\begin{figure}[hbt]
\centering
\subfloat[Noiseless real-valued model with $\x$ and $\a_i$ independently sampled  from $\mathcal{N}(0,\bm I_n)$ ]{
\includegraphics[width=0.45\linewidth]{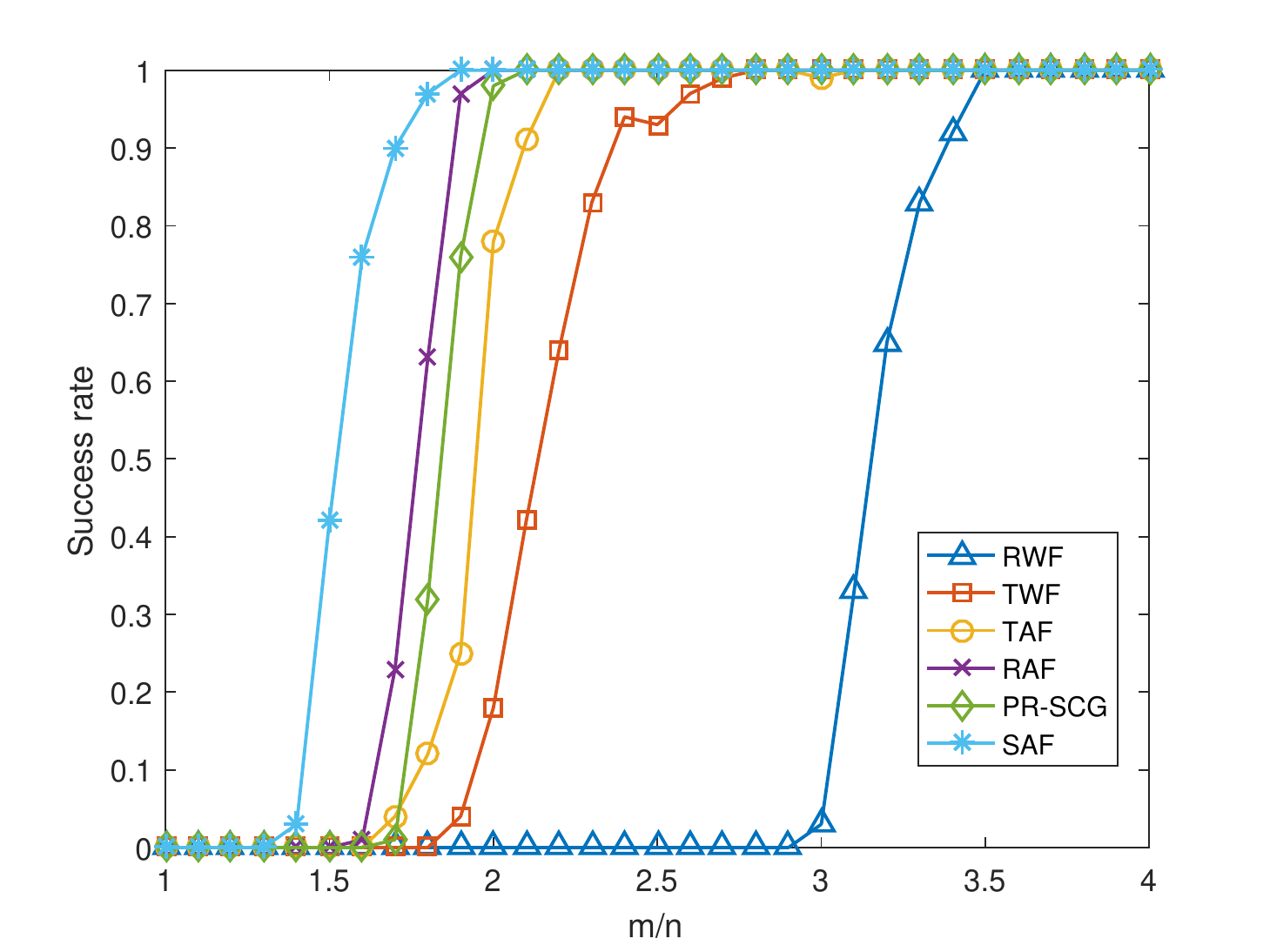}
\label{fig:real_cmp}
}
\
\subfloat[Noiseless complex-valued model with $\x$ and $\a_i$ independently sampled  from $ \mathcal{N}(0,\frac12\bm I_n)$+$j\mathcal{N}(0,\frac12\bm I_n)$.]{
\includegraphics[width=0.45\linewidth]{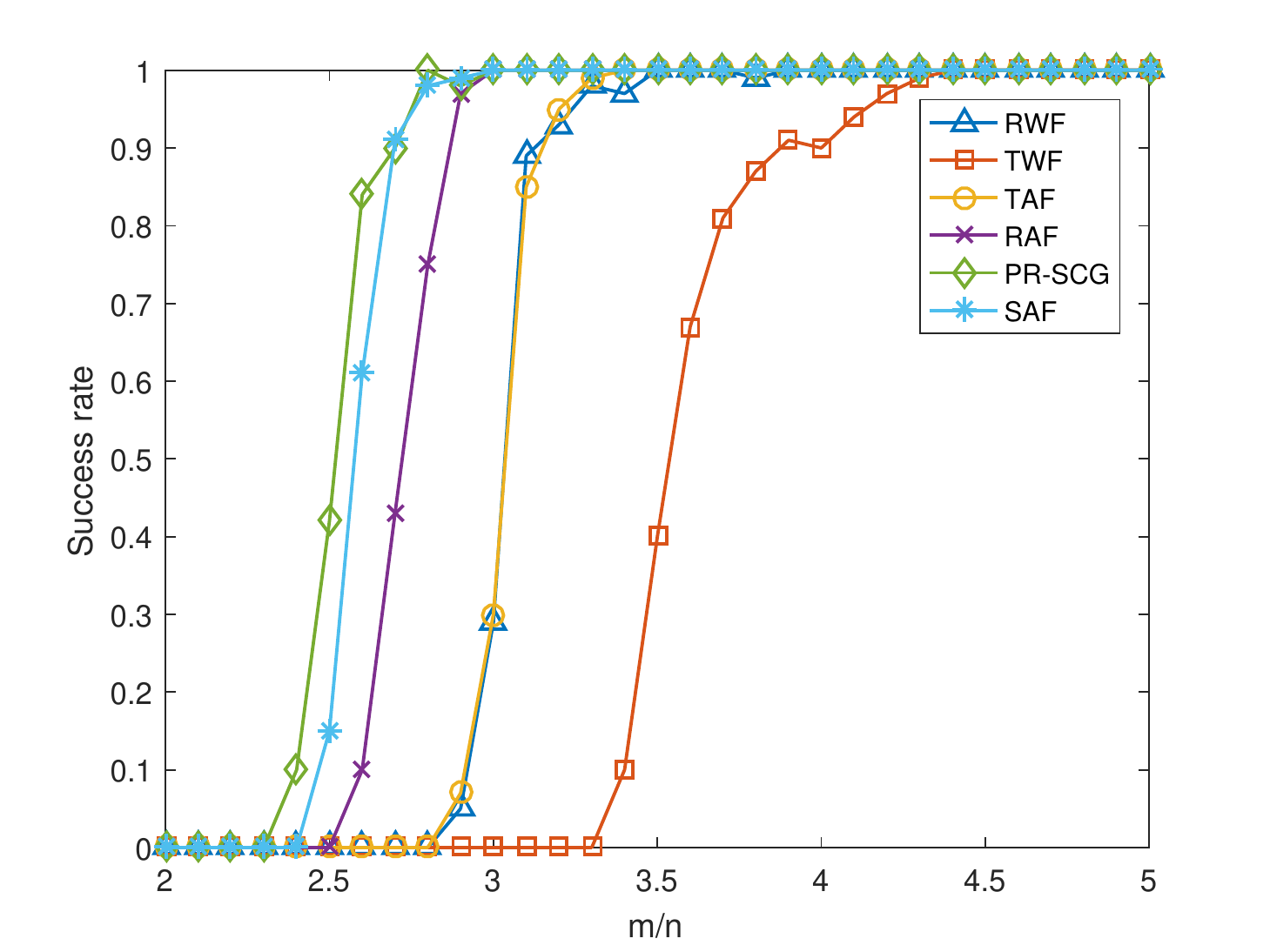}
\label{fig:cplx_cmp}
}
\caption{Empirical rate of success versus $\frac mn$ with $n=1000$. }
\label{fig:cmp}
\end{figure}

\subsection{Computational cost}

We compare the convergence rate and the time cost of SAF with other gradient descent methods, under information-theoretic limits $m=2n$ and $m=4n$  for the real and complex cases respectively. To accelerate convergence for SAF, a larger step $\mu=6$ and 10 separately for the real and complex case, while stilling achieving a $100\%$ success rate in this experiment. Table \ref{tab:convg speed} presents the number of iterations and time cost before achieving a NMSE of $10^{-14}$ for each algorithm, averaged over 100 successful trials. The optimal value is shown in bold and the second-best result is underlined in each column.
It can be observed that SAF is the second best algorithm in terms of convergence, next only to the conjugate gradient method PR-SCG. SAF ranks first and second in terms of the time cost in complex and real cases separately.

\begin{table}[hbt]
\caption{Comparison of computational costs.}
\centering
\begin{tabular}{ccccc}
\hline
\textbf{Algorithms} & \multicolumn{2}{l}{\textbf{Real Case} ($m/n=2$)}                         & \multicolumn{2}{l}{\textbf{Complex Case} ($m/n=4$)}                      \\ \hline
                   & \multicolumn{1}{l}{Iterations} & \multicolumn{1}{l}{Time (s)} & \multicolumn{1}{l}{Iterations} & \multicolumn{1}{c}{Time (s)} \\ \cline{2-5}
TWF  & - & -  & 1273.68    & 14.20  \\
RWF   & -  & - & 859.72  & {9.86} \\
TAF       & 745.51  & \textbf{0.97} & 752.74  & \underline{9.01} \\
RAF  & 1865.22 & 2.39    & 1206.10  & 13.64  \\
PR-SCG   & \textbf{138.28}    & 1.29   & \textbf{132.90}    & {9.44} \\
SAF & \underline{294.25} & \underline{1.08}& \underline{306.39} & \textbf{7.43} \\
\hline
\end{tabular}
\label{tab:convg speed}
\end{table}

\subsection{Robustness to noise}
To show the robustness of SAF against additive noise, Fig.. \ref{fig:rmse} illustrates the NMSE as a function of the signal-to-noise ratio (SNR) under different $ m/n$. The data under the Gaussian model was generated as  $b_i=(\abs{\innerprod{\a_i,\x}}^2+\eta_i)^{1/2}$, with $\eta_i$ independently {sampled} from $\mathcal{N}(0,\sigma^2)$, where $\sigma^2$ is set to achieved certain $\text{SNR}= 10\log_{10   }(\norm{\bm A\x}^2/m\sigma^2)$. For all choices of $m$, the NMSE scale decrease proportionally to the SNR, which demonstrates the stability of SAF. It is noteworthy that SAF performs robustly even under information-theoretic limit for both the real and complex Gaussian models, which has not been observed for other algorithms to the best of our knowledge.

\begin{figure}[hbt]
\centering
\subfloat[Real case]{
\includegraphics[width=0.45\linewidth]{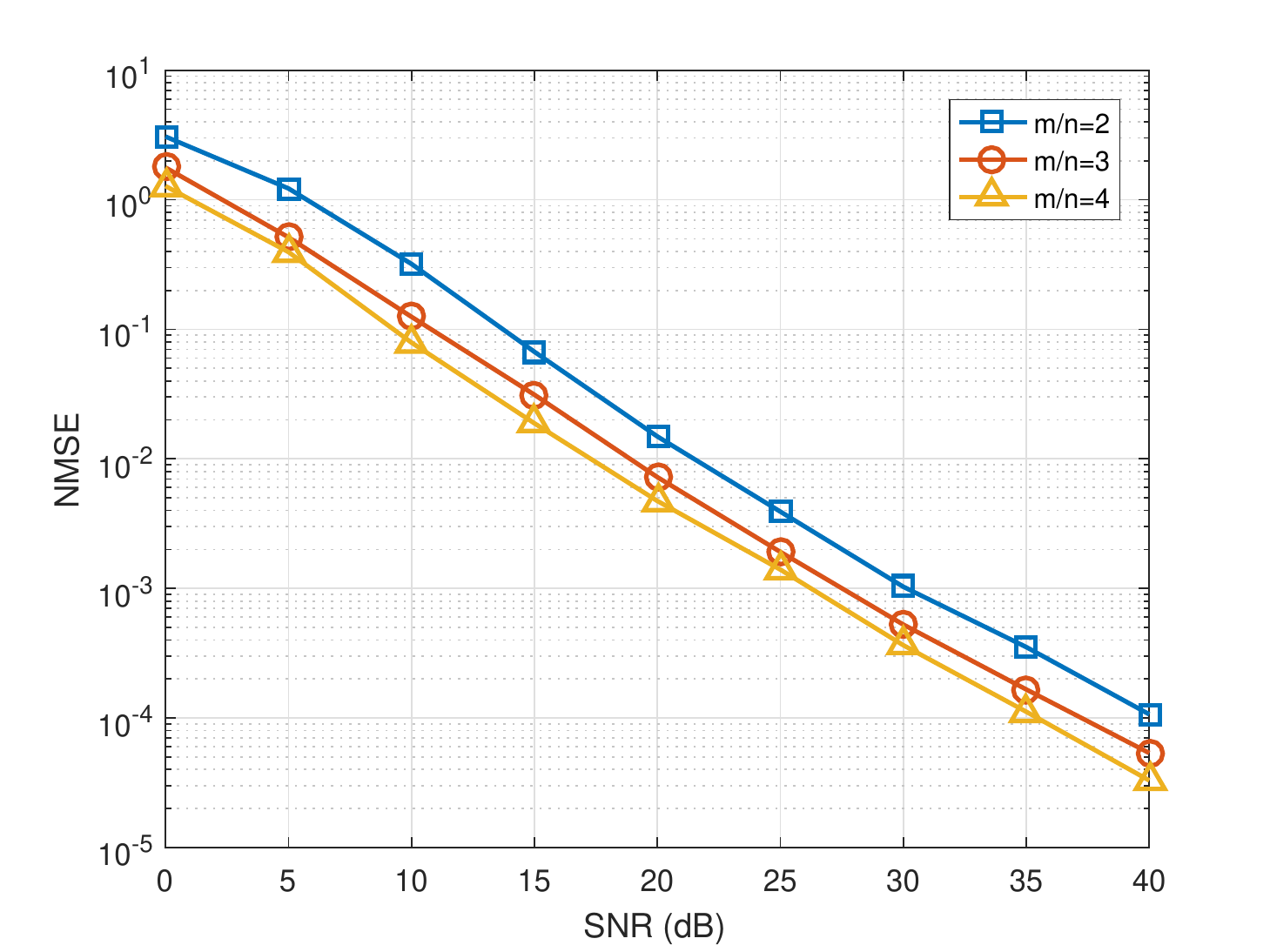}
\label{fig:real_rmse}
}
\
\subfloat[Complex case]{
\includegraphics[width=0.45\linewidth]{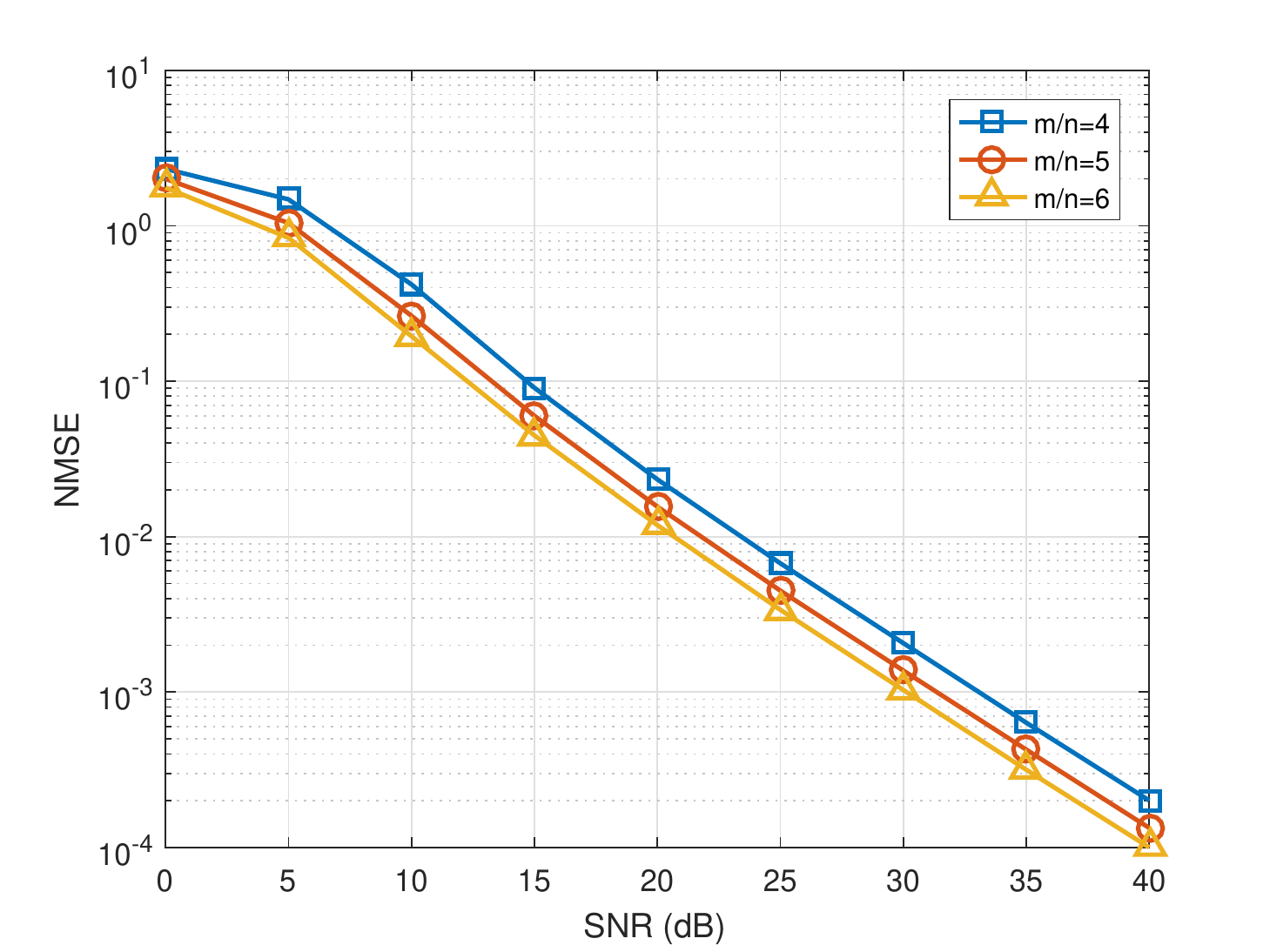}
\label{fig:cplx_rmse}
}
\caption{NMSE vs SNR for SAF under the noise Gaussian model with $n=1000$.}
\label{fig:rmse}
\end{figure}

\subsection{Image reconstructions}
To demonstrate the feasibility and scalability of SAF in phase retreival of real images, we compare SAF with other state-of-the-arts on recovering the Lena image from masked Fourier intensity measurements. This image is gray-scale that can be represented by a matrix $\bm X\in\mathbb{R}^{256\times256}$. Denoting $\x\in\mathbb{R}^n$ be a vectorization of $\bm X$, the CDP model with $K$ masks is
\begin{equation}\label{eq:cdpmodel}
  \bm b^{(k)} = |\bm{FD}^{(k)}\x|, \quad k=1,\cdots,K,
\end{equation}
where $\bm F$ represents the discrete Fourier transform matrix, and the diagonal matrix $\bm{D}^{(k)}$ is the mask, with diagonal entries sampled uniformly at random from $\dkh{1,-1,j,-j}$.
When using more than 4 masks, SAF and other exiting methods like RAF, PR-SCG with the maximum correlation initializer can recover the original image successfully. Is it possible to adopt an ever smaller $K$? Fig. \ref{fig:cdp} gives a recovered result of SAF after 200 gradient iterations in CDP model with 3 masks.
Then we compare SAF with other algorithms in terms of the recovery rate. All algorithms are seeded with the reweighted maximal correlation initialization and output results.
The comparison of success rate is shown in Table \ref{tab:cdp cmp}. It is noteworthy that all presented algorithm
SAF enjoys the highest success rate and can recover the image nearly perfectly.

\begin{figure}[hbt]
\centering
\subfloat[Ground truth]{
\includegraphics[height=0.25\linewidth]{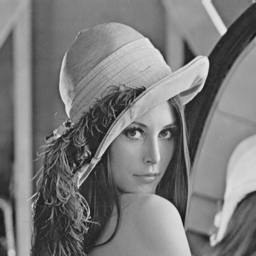}
\label{fig:original_imag}
}
\subfloat[Initialization 
]{
\includegraphics[height=0.25\linewidth]{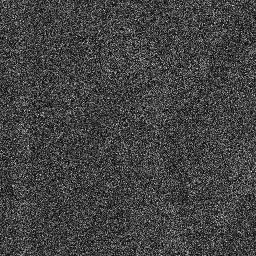}
\label{fig:ini_imag}
}
\subfloat[Recovered result]{
\includegraphics[height=0.25\linewidth]{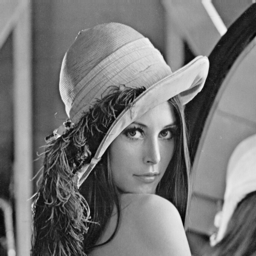}
\label{fig:saf_imag}
}
\caption{Recovered result by SAF after 200 gradient iterations from the initialization. The RMSE of initialization is 1.04 and the final RMSE is $1.04\times 10^{-3}$.}
\label{fig:cdp}
\end{figure}

\begin{table}[]
\caption{Comparisons of the success rate among algorithms on Lena image ($K$=3)}
\centering
\begin{tabular}{lcccccc}
\hline
Algorithms   & SAF  & RAF  & PR-SCG & TAF  & RWF  & TWF \\ \hline
Success Rate & \textbf{0.95} & 0.84 & 0.70   & 0.51 & 0.52 & 0   \\ \hline
\end{tabular}
\label{tab:cdp cmp}
\end{table}

\section{Conclusion}
This paper put forward SAF to solve the phase retrieval problem based on a novel smooth amplitude-based loss function. This loss function utilizes a smooth function to get rid of the non-smoothness of the original amplitude-based loss function. We prove the global geometric convergence of SAF with an elaborate initialization.
Our SAF approach is conceptually simple and can be easily implemented since it does not need extra truncating or reweighing operations upon the gradient function as many other state-of-the-art solvers do.
Substantial numerical tests were conducted and illustrated that our new loss function enjoys advantages in sampling complexity and computational efficiency.
SAF also has potential to be extended to other scenarios, e.g. recovering signals the sparsity or nonnegativity constraint.
We preliminarily analyze the reason behind the dramatic improvement brought by such a new loss function.  It will be of great interest to deeply investigate theoretical advantages of our SAF approach when measuring times is lower than the information-theoretic limit.

\section*{Acknowledgments}
This work was supported in part by National Natural Science foundation (China): 61571008. The authors would like to thank Samuel Pinilla for kindly sharing the codes and helpful discussion about PR-SCG.

\bibliographystyle{unsrt}
\bibliography{saf}

\appendix

\section{Supporting Lemmas for Section \ref{sec:theo}}\label{sec:loc convg proof}
For clarity, we give a convergence analysis for SAF with $\gamma=1,k=4$, and we conjecture that SAF adopting other reasonable setting of $\gamma$ and $k$ can also be proved with the similar routine.

For convenience, we rewrite the gradient function $\{\nabla\ell_i(z)\}_{i=1}^m$ into a unified form by a bivariate function $f$: $$\nabla \ell_i(\z) = f(\a_i'\z,b_i)\a_i,$$
where
\begin{equation}\label{eq:saf grad}
  f(x,y)= \kh{(x^4+y^4)^{1/4} -2^{1/4}y} (x^4+y^4)^{ -3/4}x^{3}.
\end{equation}
$f(\cdot,b_i)$ has the following properties:
\begin{property}\label{propty:1}
  $f(x,b_i) = f(-x,b_i)$.
\end{property}
\begin{property}\label{propty:geq0}
  For any $x\in[-b_i,b_i]$, we have $f(\pm b_i +  x ,b_i)x \geq 0 $.
\end{property}
\begin{property}\label{prop:3}
  $f(\pm b_i +  x ,b_i)x \geq 0.18 x^2 $ holds for any $x\in[-b_i/5,b_i/5]$.
\end{property}
\begin{property}\label{prop:4}
  $\abs{f(\pm b_i +  x ,b_i)/x }\leq 1$.
\end{property}
Property 1 is obvious. The details of proving other properties are put in the Appendix \ref{sec:property}. These four properties are the key intergradients in the proof of global convergence. For SAF with other setting of $\gamma>0$ and $k>2$, one can also establish these 4 properties.

The regularity condition can be proved by finding the upper bound on $\norm{\nabla \ell(\z)}$ using the property 3 in Lemma \ref{lem:nabla l leq}, and finding the lower bound on $\innerprod{\nabla\ell(\z),\h}$ based on Property 2 in Lemma \ref{lem:nablal h geq}.

\begin{lem}\label{lem:nabla l leq}
Fix $\delta>0$, given $m>c_0n$,
\begin{equation}\label{eq:temp}
  \norm{\nabla \ell(\z)}\leq \kh{1+\delta}\norm{\h}
\end{equation}
holds with probability at least $1-C\exp(-c_1m)$, where $c_0,c_1$ and $C>0$ are some universal constants.
\end{lem}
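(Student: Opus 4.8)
The plan is to reduce the estimate to a standard spectral bound on the sample second-moment matrix $\bm S_m:=\frac1m\sum_{i=1}^m\a_i\a_i'$, using only Property \ref{prop:4} to control the scalar factors $f(\a_i'\z,b_i)$. Fix the global-phase representative of $\x$ so that $\norm{\h}=\norm{\z-\x}=\dist{\z,\x}$, and decompose $\a_i'\z=\a_i'\x+\a_i'\h$. Since $b_i=\abs{\a_i'\x}$, we may write $\a_i'\x=s_ib_i$ with $s_i\in\{-1,+1\}$, so that $\a_i'\z=s_ib_i+\a_i'\h$. Property \ref{prop:4}, applied with the argument in the form $\pm b_i+x$ (choosing the sign equal to $s_i$) and $x=\a_i'\h$, then gives the pointwise deterministic bound
\begin{equation*}
  \abs{f(\a_i'\z,b_i)}\le\abs{\a_i'\h},\qquad i=1,\dots,m.
\end{equation*}
Note that this holds for every $\z$, with no restriction to the ball $\norm{\z-\x}\le\epsilon\norm{\x}$.

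Next I would pass to the operator form of the gradient norm. Writing $\nabla\ell(\z)=\frac1m\sum_i f(\a_i'\z,b_i)\a_i$ and $\norm{\nabla\ell(\z)}=\sup_{\norm{\bm u}=1}\frac1m\sum_i f(\a_i'\z,b_i)(\a_i'\bm u)$, the triangle inequality together with the pointwise bound above yields
\begin{equation*}
  \norm{\nabla\ell(\z)}\le\sup_{\norm{\bm u}=1}\frac1m\sum_{i=1}^m\abs{\a_i'\h}\,\abs{\a_i'\bm u}
  \le\sup_{\norm{\bm u}=1}\left(\frac1m\sum_{i=1}^m(\a_i'\h)^2\right)^{1/2}\left(\frac1m\sum_{i=1}^m(\a_i'\bm u)^2\right)^{1/2},
\end{equation*}
the last step being Cauchy--Schwarz. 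Each quadratic form equals $\v'\bm S_m\v\le\norm{\bm S_m}\,\norm{\v}^2$ (with $\v=\h$, resp. $\v=\bm u$), so the right-hand side is at most $\norm{\bm S_m}\,\norm{\h}$.

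It then remains only to control $\norm{\bm S_m}$, which is a standard matter: for any $\delta_0>0$ there is $c_0=c_0(\delta_0)$ so that $m\ge c_0 n$ implies $\norm{\bm S_m-\bm I_n}\le\delta_0$, hence $\norm{\bm S_m}\le1+\delta_0$, with probability at least $1-C\exp(-c_1m)$ --- via a covering-net argument over the unit sphere combined with a Bernstein-type tail bound for $\frac1m\sum_i(\a_i'\v)^2-1$ (sub-exponential concentration of Gaussian quadratic forms). Taking $\delta_0=\delta$ and combining with the previous display gives $\norm{\nabla\ell(\z)}\le(1+\delta)\norm{\h}$ on this event, simultaneously for all $\z$, which is the claim. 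I do not expect any real obstacle here: the only probabilistic input is this covariance concentration bound, and the essential simplification is that Property \ref{prop:4} erases both the particular form of $f$ and the location of $\z$ from the analysis, so that uniformity over $\z$ comes for free and no covering argument over $\z$ is needed.
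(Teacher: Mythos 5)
Your proof is correct and follows essentially the same route as the paper's: both use Property \ref{prop:4} to reduce the bound to the operator norm of $\frac1m\sum_i\a_i\a_i'$ and then invoke standard Gaussian covariance concentration (the paper cites Lemma 3.1 of the PhaseLift paper for this last step, and its attribution of the key inequality to ``Property 3'' is a typo for Property \ref{prop:4}, which you cite correctly). The only cosmetic difference is that you pass through Cauchy--Schwarz on the bilinear form $\sup_{\norm{\bm u}=1}\frac1m\sum_i\abs{\a_i'\h}\abs{\a_i'\bm u}$, whereas the paper bounds $\norm{\frac1m\sum_i c_i\a_i\a_i'}$ with $\abs{c_i}\le1$ directly by positive-semidefinite domination; both land on $\norm{\frac1m\sum_i\a_i\a_i'}\,\norm{\h}$.
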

\begin{proof}
Since
\begin{equation}\label{eq:gfunc}
\begin{aligned}
   \nabla \ell_i(\z) &= f(\a_i'\x+\a_i'\h,b_i)\a_i\\
    &= \kh{f(\a_i'\x+ \a_i'\h,b_i)/(\a_i'\h) } \a_i\a_i'\h
\end{aligned}
\end{equation}
and $\abs{f(\a_i'\x+ \a_i'\h,b_i)/(\a_i'\h)}\leq 1$ , then we have
\begin{equation}\label{eq:norm l bound}
\begin{aligned}
  \norm{\nabla\ell(\z)} &= \norm{\sum_{i=1}^{m}\frac1m \kh{f(\a_i'\x+ \a_i'\h,b_i)/(\a_i'\h) } \a_i\a_i'\h }\\
  &\leq \norm{\sum_{i=1}^{m} \frac1m\kh{f(\a_i'\x+ \a_i'\h,b_i)/(\a_i'\h)} \a_i\a_i'}\norm{ \h }\\
  &\leq \norm{\sum_{i=1}^{m} \frac1m\abs{f(\a_i'\x+ \a_i'\h,b_i)/(\a_i'\h)} \a_i\a_i'} \norm{ \h }\\
  &\leq \norm{\sum_{i=1}^{m} \frac1m \a_i\a_i'} \norm{ \h }
\end{aligned}
\end{equation}
The last inequality is induced by Property 3.

By Lemma 3.1 in \cite{candes2013phaselift}, as long as  $m>c_0n$ for sufficiently large $c_0$, we have
\begin{equation}\label{eq:clsineq}
    \norm{\sum_{i=1}^{m} \frac1m \a_i\a_i'} \norm{ \h } \leq \sqrt{1+\delta} \norm{\h}\leq \kh{1+\delta}\norm{\h}
\end{equation}
with probability at least $1-C\exp(-c_1m)$.
\end{proof}

\begin{lem}\label{lem:nablal h geq}
For any sufficiently small constant $\epsilon>0$, there exist some universal constants $c_0,c_1,C$ such that, given $m\geq c_0n$,
\begin{equation}\label{eq:ellz h larger}
  \innerprod{\nabla\ell(\z),\h} \geq \tau\norm{\h}^2
\end{equation}
holds with probability at least $1-C\exp(-c_1m)$ for all $\h\in\mathbb R^n$ obeying $\norm{\h}\leq\norm{\x}/20$. Here $\tau=0.07-\epsilon$.
\end{lem}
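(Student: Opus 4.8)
The plan is to lower-bound $\innerprod{\nabla\ell(\z),\h}=\frac1m\sum_{i=1}^m f(\a_i'\z,b_i)(\a_i'\h)$ summand by summand using Properties 2--4, thereby reducing the claim to a uniform lower bound, over the ball $\{\h:\norm{\h}\le\norm{\x}/20\}$, on two truncated empirical second moments of the $\a_i'\h$. Writing $\z=\x+\h$ and using $\a_i'\x=\pm b_i$, each summand has the form $f(\pm b_i+\a_i'\h,\,b_i)(\a_i'\h)$: Property 3 gives it the lower bound $0.18(\a_i'\h)^2$ on the event $\{|\a_i'\h|\le b_i/5\}$, Property 2 gives the lower bound $0$ on the larger event $\{|\a_i'\h|\le b_i\}$, and Property 4 gives the unconditional lower bound $-(\a_i'\h)^2$. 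Combining these termwise and averaging,
\[
\innerprod{\nabla\ell(\z),\h}\ \ge\ 0.18\,S_1(\h)-S_2(\h),\qquad
S_1(\h):=\tfrac1m\!\sum_i(\a_i'\h)^2\mathbf 1_{\{|\a_i'\h|\le b_i/5\}},\quad
S_2(\h):=\tfrac1m\!\sum_i(\a_i'\h)^2\mathbf 1_{\{|\a_i'\h|> b_i\}},
\]
so it suffices to prove $S_1(\h)\ge(\rho_1-\epsilon')\norm{\h}^2$ and $S_2(\h)\le(\rho_2+\epsilon')\norm{\h}^2$ uniformly in $\h$, with $0.18\rho_1-\rho_2$ exceeding $0.07$ with room to spare.

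For the expectations I would use rotational invariance of the Gaussian $\a$: decomposing $\h=h_\parallel\x/\norm{\x}+\h_\perp$ gives $(\a'\x,\,\a'\h)=(\norm{\x}\xi,\ h_\parallel\xi+\norm{\h_\perp}\zeta)$ with $\xi,\zeta$ i.i.d.\ standard normal, so $\E[(\a'\h)^2]=\norm{\h}^2$ and both $\E[(\a'\h)^2\mathbf 1_{\{|\a'\h|\le b/5\}}]$ and $\E[(\a'\h)^2\mathbf 1_{\{|\a'\h|>b\}}]$ reduce to elementary planar Gaussian integrals depending only on the angle between $\h$ and $\x$ and on the ratio $\norm{\h}/\norm{\x}\le 1/20$. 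One checks these are extremized when $\h\perp\x$ with $\norm{\h}/\norm{\x}=1/20$, which pins down $\rho_1,\rho_2$; it is precisely the hypothesis $\norm{\h}\le\norm{\x}/20$ that keeps the truncation level $b_i/5$ comfortably above a typical $|\a_i'\h|$. For each fixed $\h$, $S_1(\h)$ and $S_2(\h)$ are averages of i.i.d.\ sub-exponential variables dominated by $(\a_i'\h)^2$, so Bernstein's inequality gives $|S_j(\h)-\E S_j(\h)|\le\epsilon'\norm{\h}^2$ with probability at least $1-Ce^{-c_1m}$ once $m\ge c_0n$.

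To make the estimate hold simultaneously for every $\h$ in the ball I would take a $\rho$-net of cardinality at most $(3/\rho)^n$ and union-bound, the $e^{O(n)}$ factor being absorbed by enlarging $c_0$. The main obstacle is that the indicators $\mathbf 1_{\{|\a_i'\h|\le b_i/5\}}$ and $\mathbf 1_{\{|\a_i'\h|>b_i\}}$ are discontinuous in $\h$, so $S_1,S_2$ are not Lipschitz and the net cannot be used naively. I would handle this by the standard device of sandwiching each indicator between $1/\rho'$-Lipschitz approximants---equivalently, using anti-concentration of $\a_i'\h$ to bound by $\lesssim\rho m$ (with high probability, uniformly) the number of indices whose $|\a_i'\h|$ falls within $O(\rho)$ of the truncation boundary, on which $(\a_i'\h)^2$ is bounded---so that replacing a generic $\h$ by its nearest net point perturbs $S_1(\h),S_2(\h)$ by only $O(\rho)\norm{\h}^2$. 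Collecting everything, for all $\h$ with $\norm{\h}\le\norm{\x}/20$ we obtain
\[
\innerprod{\nabla\ell(\z),\h}\ \ge\ \bigl(0.18\rho_1-\rho_2-O(\epsilon'+\rho)\bigr)\norm{\h}^2\ \ge\ (0.07-\epsilon)\norm{\h}^2
\]
with probability at least $1-Ce^{-c_1m}$, upon choosing $\epsilon',\rho$ small relative to $\epsilon$. Besides the net/discontinuity argument, the only other point requiring real care is verifying that the planar Gaussian integrals in the extremal configuration indeed leave $0.18\rho_1-\rho_2$ above $0.07$.
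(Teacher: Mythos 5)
Your proposal is correct and follows essentially the same route as the paper: the termwise use of Properties 2--4 on the three regimes $|\a_i'\h|\le b_i/5$, $b_i/5<|\a_i'\h|\le b_i$, $|\a_i'\h|>b_i$ reproduces the paper's $\mathcal A,\mathcal B,\mathcal C$ decomposition, and your bound $0.18\,S_1-S_2$ with $\rho_1=0.75$, $\rho_2=0.065$ matches the paper's $0.18(1-0.25)-0.065=0.07$. The concentration machinery (planar Gaussian integrals for the truncated moments, Bernstein, Lipschitz sandwiching of the indicators, and an $\epsilon$-net) is likewise the same as in the paper's Lemma 3, including the fact that the extremal values of the Gaussian integrals are verified numerically rather than analytically.
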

\begin{proof}
We introduce the following events for $i=1,\cdots,m$:
$$\mathcal{A}:=\dkh{i:\abs{\a_i\h}<\frac15\abs{\a_i\x}},$$ $$\mathcal{B}:=\dkh{i:
\frac15\abs{\a_i\x}\leq\abs{\a_i\h}< \abs{\a_i\x}},$$
and $$\mathcal{C}:=\dkh{i:\abs{\a_i\h}>\abs{\a_i\x}}.$$
Using Lemma

Then we have
\begin{equation}\label{eq:tempss}
\begin{aligned}
  \innerprod{\nabla\ell(\z),\h} = \frac1m \sum_{i=1}^{m} f(\a_i'\z,b_i)\a_i'\h =
    \frac1{m} \sum_{i\in \mathcal{A}\cup\mathcal{B}\cup\mathcal{C} }f(\a_i'\z,b_i)\a_i'\h
\end{aligned}
\end{equation}
According to Property \ref{propty:geq0}, we have
\begin{equation}\label{eq:B>0}
  \frac1m\sum_{i\in\mathcal{B}}f(\a_i'\z,b_i)\a_i'\h \geq 0.
\end{equation}
Therefore
\begin{equation}\label{eq:tempssss}
\begin{aligned}
    \innerprod{\nabla\ell(\z),\h} &\geq
    \frac1m \sum_{i\in\mathcal{A}} f(\a_i'\z,b_i)\a_i'\h+\frac1m \sum_{i\in\mathcal{C}} f(\a_i'\z,b_i)\a_i'\h \\
    &\geq
    \frac1m \sum_{i\in\mathcal{A}} s\abs{\a_i'\h}^2 - \frac1{m} \sum_{i\in\mathcal{C}} \abs{f(\a_i'\z,b_i)\a_i'\h}^2 \\
    &\geq
    \frac1m \sum_{i\in\mathcal{A}} s\abs{\a_i'\h}^2 - \frac1{m} \sum_{i\in\mathcal{C}}\abs{\a_i'\h}^2\\
    &=
    \frac1m \sum_{i=1}^m s\abs{\a_i'\h}^2 -\frac1m\sum_{i\in\mathcal{A}^c}s\abs{\a_i'\h}^2   - \frac1{m} \sum_{i\in\mathcal{C}}  \abs{\a_i'\h}^2,
\end{aligned}
\end{equation}
and the last two terms has the same form and can be bounded by the Lemma \ref{lem:ah bound}. Then we have
\begin{equation}\label{eq:nabla l(z)h}
\begin{aligned}
  \innerprod{\nabla\ell(\z),\h} &\geq 0.18(\norm{\h}^2-0.25\norm{\h}^2) - 0.065\norm{\h}^2 - \epsilon\norm{\h}^2\\
  &=(0.07-\epsilon)\norm{\h}^2.
\end{aligned}
\end{equation}
\end{proof}

\begin{lem}\label{lem:ah bound}
  For any $\epsilon>0,$  given $m>c_0 n\epsilon^{-2}\log\epsilon^{-1} $, then
  \begin{equation}\label{eq:rrr}
    \frac{1}{m} \sum_{i=1}^{m} (\a_i'\h)^2 \cdot \bm 1_{\frac15|a_i'\x|\leq |\a_i'\h|}
    \leq
    ({0.25}+\epsilon)\norm{\h}^2
  \end{equation}
  \begin{equation}\label{eq:rrrs}
        \frac{1}{m} \sum_{i=1}^{m} (\a_i'\h)^2 \cdot \bm 1_{|a_i'\x|\leq |\a_i'\h|}
        \leq
        ({0.065}+\epsilon)\norm{\h}^2
  \end{equation}
  hold simultaneously with probability at least $1-C\exp(-c_1\epsilon^2m)$ for all vectors $\h\in\mathbb{R}^n$ obeying $\norm{\h}\leq \frac1{20}\norm{\x}$, where $c_0,c_1,C$ are some universal constants.
\end{lem}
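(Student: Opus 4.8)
The plan is to prove the expectation version of \eqref{eq:rrr}--\eqref{eq:rrrs} for a fixed $\h$ and then promote it to a statement that is uniform over the ball $\dkh{\h:\norm{\h}\le\norm{\x}/20}$ by a covering argument. \textbf{Reduction to the plane.} Fix $\h$ and normalize $\norm{\x}=1$; set $\rho=\norm{\h}\le 1/20$ and decompose $\h=p\,\x+q\,\bm u$ with $\bm u$ a unit vector orthogonal to $\x$ and $p^2+q^2=\rho^2$. Then $\xi:=\a_i'\x$ and $\zeta:=\a_i'\bm u$ are independent standard normals, $\a_i'\h=p\xi+q\zeta$, and each summand in \eqref{eq:rrr}--\eqref{eq:rrrs} is a function of $(\xi,\zeta)$ only. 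Hence it suffices to bound, for $c\in\dkh{1,5}$,
$$
\Phi_c(p,q):=\E\zkh{(p\xi+q\zeta)^2\,\bm 1_{\{|\xi|\le c|p\xi+q\zeta|\}}}
$$
by $(\kappa_c+\tfrac\epsilon2)\rho^2$, where $\kappa_5=0.25$ for \eqref{eq:rrr} and $\kappa_1=0.065$ for \eqref{eq:rrrs}.

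\textbf{The planar integral.} On the indicator event one has $|\xi|\le c(|p|\,|\xi|+q|\zeta|)$, and since $|p|\le\rho\le 1/20$ this forces $|\xi|\le c'q|\zeta|$ with $c'=c/(1-c\rho)$; consequently $|p\xi+q\zeta|\le(1+c'\rho)\,q|\zeta|$ on the event. Combining this with $\E_\xi[\bm 1_{\{|\xi|\le t|\zeta|\}}]=\erf(t|\zeta|/\sqrt2)\le\sqrt{2/\pi}\,t|\zeta|$ and $\E|\zeta|^3=2\sqrt{2/\pi}$ bounds $\Phi_c$ by a constant times $q^3$, and $q\le\rho\le\tfrac1{20}$ then gives $q^3\le\tfrac1{20}q^2\le\tfrac1{20}\rho^2$; the worst case over $\rho$ sits at $\rho=1/20$ because $\Phi_c(p,q)/\rho^2$ is increasing under the scaling $(p,q)\mapsto\lambda(p,q)$. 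The crude substitution $|p\xi+q\zeta|\le|p|\,|\xi|+q|\zeta|$ is however too lossy to deliver the advertised $0.25$ (and is only borderline for $0.065$); the sharp constants are obtained by evaluating the two-dimensional Gaussian integral defining $\Phi_c$ directly, using that the event confines $(\xi,\zeta)$ to a narrow wedge when $\rho\le1/20$.

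\textbf{Concentration and uniformization.} For a fixed $\h$ the i.i.d.\ summands $(\a_i'\h)^2\bm 1_{\mathcal E_i}$ are nonnegative and dominated by $(\a_i'\h)^2$, hence sub-exponential with parameter $O(\norm{\h}^2)$, so Bernstein's inequality yields $\tfrac1m\sum_i(\a_i'\h)^2\bm 1_{\mathcal E_i}\le\E+\tfrac\epsilon2\norm{\h}^2$ with probability at least $1-C\exp(-c_1\epsilon^2m)$. To make this uniform, cover $\dkh{\h:\norm{\h}\le\norm{\x}/20}$ by a $\delta_0$-net of cardinality at most $(3/\delta_0)^n$; a union bound keeps the failure probability at $C\exp(-c_1\epsilon^2m)$ as soon as $m\ge c_0n\epsilon^{-2}\log(1/\delta_0)$, and taking $\delta_0$ a fixed power of $\epsilon$ produces the stated $m>c_0n\epsilon^{-2}\log\epsilon^{-1}$. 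Finally, for $\norm{\h-\h_0}\le\delta_0$ the quadratic factors differ by $O(\delta_0)\norm{\h}^2$ on the event $\tfrac1m\sum_i\a_i\a_i'\preceq 2\bm I$ (Lemma 3.1 of \cite{candes2013phaselift}), while the change in the indicators is absorbed by sandwiching $\mathcal E_i(\h)$ between events $\mathcal E_i^\pm(\h_0)$ with thresholds perturbed by $O(\delta_0)$ and reapplying the previous two steps to those enlarged events.

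\textbf{Main obstacle.} The genuinely delicate point is the uniformization: because $\bm 1_{\mathcal E_i}$ is discontinuous in $\h$, no Lipschitz bound is available, and one must widen the events by a margin tied to the net resolution while verifying that the resulting increase of the expectation is only $O(\epsilon)$ uniformly over the wedge parameters; this interplay between net fineness and event margin is precisely what forces the $\log\epsilon^{-1}$ loss in the sample complexity. Pinning down the numerical constants $0.25$ and $0.065$ in the planar step (rather than some larger absolute constant) is the other place where genuine care is required.
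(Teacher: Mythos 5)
Your overall architecture matches the paper's: reduce $\E[(\a_i'\h)^2\bm 1_{\mathcal E_i}]$ to a two‑dimensional Gaussian integral depending only on the correlation between $\a_i'\x$ and $\a_i'\h$, concentrate for fixed $\h$ via a sub‑exponential Bernstein bound, and uniformize over a net. On the planar step you are also at the same level of rigor as the paper: your crude wedge bound gives roughly $0.75$ and $0.074$ rather than $0.25$ and $0.065$, and you defer to ``direct evaluation''; the paper does exactly that, writing the integral out in terms of $\erf$ and evaluating it numerically in Mathematica over $\rho\in[-1,1]$, so neither argument pins the constants analytically.

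The one concrete gap is in your uniformization mechanism. You propose sandwiching $\mathcal E_i(\h)$ between events $\mathcal E_i^{\pm}(\h_0)$ ``with thresholds perturbed by $O(\delta_0)$,'' but the perturbation of $|\a_i'\h|$ across a net cell is $|\a_i'(\h-\h_0)|$, which is random and only bounded pointwise by $\norm{\a_i}\delta_0\approx\sqrt{n}\,\delta_0$. To turn this into a deterministic $O(\delta_0)$ threshold shift you would need $\delta_0\lesssim 1/\sqrt{n}$, inflating the net cardinality to $\exp(Cn\log n)$ and hence the sample complexity to $m\gtrsim n\log n$, which breaks the claimed $m>c_0 n\epsilon^{-2}\log\epsilon^{-1}$. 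The paper avoids this by replacing the indicator with the piecewise‑linear, $\frac1\delta$‑Lipschitz surrogate $\chi_i(|\a_i'\h|^2)$ (with the threshold depending only on the \emph{fixed} quantity $(\a_i'\x)^2$), so that the net‑transfer error is controlled in the averaged sense $\frac1m\sum_i\bigl||\a_i'\h|^2-|\a_i'\h_0|^2\bigr|\lesssim\norm{\h-\h_0}\,\norm{\h}$ via Cauchy--Schwarz and the spectral bound on $\frac1m\sum_i\a_i\a_i'$, with no dimension loss. You correctly flag the discontinuity of the indicator as the delicate point, but the mechanism you describe does not resolve it as stated; the Lipschitz mollifier (applied before both the expectation and the concentration steps, so that the sandwich $(\a_i'\h)^2\bm 1_{r|\a_i'\x|\le|\a_i'\h|}\le\chi_i(|\a_i'\h|^2)\le|\a_i'\h|^2\bm 1_{\sqrt{1-\delta}\,r|\a_i'\x|\le|\a_i'\h|}$ is what actually gets uniformized) is the missing ingredient.
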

\begin{proof}
We only give the detailed proof of \eqref{eq:rrr} here since the proof of \eqref{eq:rrrs} is almost the same.
We first prove the bound for any fixed vector $\h$ obeying $\norm{\h}\leq \frac1{20}\norm{\x}$ and then establish a uniform bound using an $\epsilon$-net for all vectors.

To proceed, we introduce a Lipschitz function
\begin{equation}\label{eq:chi func}
  \chi_i(t): = \left\{
    \begin{array}{ll}
      t, &\text{if } t> r^2(\a_i'\x)^2;  \\
      \frac1\delta (t-r^2(\a_i'\x)^2) + r^2(\a_i'\x)^2, &\text{if } (1-\delta)(\a_i'\x)^2\leq t\leq r^2(\a_i'\x)^2;  \\
      0, &\text{else,}
    \end{array}
  \right.
\end{equation}
for $i=1\cdots,m$, where $r>0$. The Lipschitz constant of $\chi_i(t)$ is $\frac1\delta.$
We further have
\begin{equation}\label{eq:chi ineq}
  \kh{\a_i'\h}^2 \bm 1_{r|\a_i'\x|\leq |\a_i'\h|} \leq \chi_i(|\a_i'\h|^2)
  \leq
  |\a_i'\h|^2 \bm 1_{\sqrt{1-\delta}\,r|\a_i'\x|\leq |\a_i'\h|}
\end{equation}
For convenience, we denote $\theta:= \norm{\h}/\norm{\x}$ and construct random variables $$\gamma_i:= \frac{|\a_i\h|^2}{\norm{\h}^2} \bm 1_{\sqrt{1-\delta}r|\a_i'\x|\leq |\a_i'\h|}.$$
We next compute the expectation of $\gamma_i$, via the conditional expectation,
\begin{equation}\label{eq:exp gamma}
  \E[\gamma_i] = \iint_{-\infty}^{\infty} \E\zkh{\gamma_i\big|\a_i'\x=\tau_1\norm{\x}, \a_i'\h=\tau_2\norm{\h}}q(\tau_1,\tau_2)d\tau_1 d\tau_2,
\end{equation}
where $q(\tau_1,\tau_2)$ is the probability density of two joint Gaussian random variables with correlation $\rho=\frac{\h'\x}{\norm{\h}\norm{\x}} \neq \pm 1  $. Next, we calculate $\E[\gamma_i]$ as a function of $\rho$:
\begin{equation}\label{eq:eee}
\begin{aligned}
  \E[\gamma_i] &= \iint_{-\infty}^{\infty} \tau_2^2 \cdot\bm1_{\sqrt{1-\delta} r |\tau_1|< |\tau_2|\theta} \cdot q(\tau_1,\tau_2) d\tau_1d\tau_2\\
  &=
  \frac{1}{2\pi\sqrt{1-\rho^2}} \int_{-\infty}^{\infty}\tau_2^2 \exp\kh{-\frac{\tau_2^2}{2}} \int_{-\frac{-|\tau_2|\theta}{r\sqrt{1-\delta}} }^{-\frac{|\tau_2|\theta}{r\sqrt{1-\delta}}} \exp\kh{-\frac{(\tau_1-\rho\tau_2)^2}{2(1-\rho^2)}}d\tau_1d\tau_2\\
  &=
  \frac{1}{2\pi} \int_{-\infty}^{\infty}\tau_2^2\exp\kh{-\frac{\tau_2^2}{2}}
  \int_{\frac{-\frac{|\tau_2|\theta}{r\sqrt{1-\delta} }- \rho\tau_2}{\sqrt{1-\rho^2}} }^{  \frac{\frac{|\tau_2|\theta}{r\sqrt{1-\delta} }- \rho\tau_2}{\sqrt{1-\rho^2}} } \exp\kh{-\frac{\tau^2}{2}}d\tau d\tau_2\\
  &=
  \frac{1}{2\pi} \int_{-\infty}^{\infty}\tau_2^2\exp\kh{-\frac{\tau_2^2}{2}} \cdot \sqrt{\frac{\pi}{2}} \kh{\erf \kh{\frac{\frac{|\tau_2|\theta}{r\sqrt{1-\delta}}-\rho\tau_2 }{\sqrt{2(1-\rho^2)}}}
  -\erf \kh{-\frac{\frac{|\tau_2|\theta}{r\sqrt{1-\delta}}-\rho\tau_2 }{\sqrt{2(1-\rho^2)}}}}d\tau_2 \\
  &= \frac{1}{2\pi} \int_{0}^{\infty}\tau_2^2\exp\kh{-\frac{\tau_2^2}{2}}
  \kh{\erf \kh{\frac{  \kh{\frac{\theta}{r\sqrt{1-\delta}}-\rho} \tau_2 }{\sqrt{2(1-\rho^2)}}    }
  +\erf \kh{\frac{   \kh{\frac{\theta}{r\sqrt{1-\delta}}+\rho} \tau_2 }{\sqrt{2(1-\rho^2)}}}}d\tau_2
\end{aligned}
\end{equation}
To prove \eqref{eq:rrr}, we let $r=\frac15$ and take $\delta=0.01$ and $\theta=1/20$. Then for $\rho=\pm1$, $\E[\gamma_i]=0$. As for $\rho\in(-1,1)$, we calculate it numerically using Mathematica. The result is shown in Fig. \ref{fig:egammafig} and we can see that $ \E[\gamma_i]\leq 0.25$ for $\rho\in[-1,1]$.

Moreover, the second term of \eqref{eq:eee} indicates that $\E[\gamma_i]$ is increasing with $\theta$. Therefore we have $\E[\gamma_i] \leq 0.25$ for $\theta\leq\frac1{20}$ and $\delta=0.01$, which further indicates $\E[\chi_i(|\a_i'\h|^2)]\leq0.25$ for $\theta<1/20$ and $\delta=0.01$.
\begin{figure}
  \centering
  \includegraphics[width=0.5\linewidth]{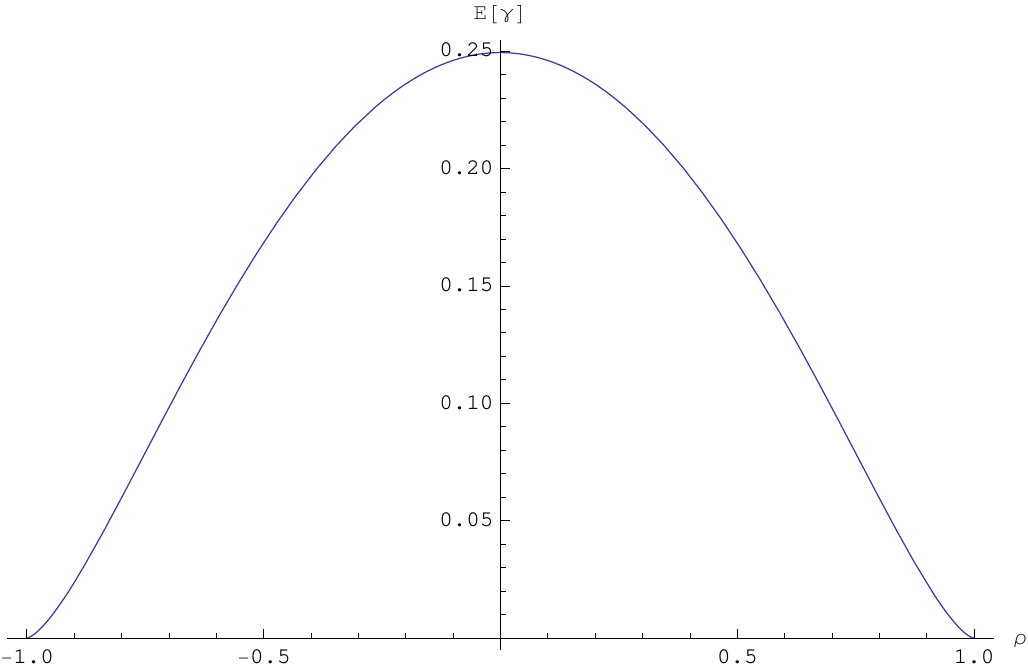}
  \caption{The figure of $\E[\gamma_i]$ for $\theta=1/20,$ with respect to $\rho$.}\label{fig:egammafig}
\end{figure}
Furthermore, $\chi_i(\abs{ \a_i'\h}^2)$ is sub-exponential. By the Bernstein-type sub-exponential tail bound \cite{eldar2012compressed},
\begin{equation}\label{eq:bernstein ineq}
  \frac{1}{m} \sum_{i=1}^{m} \frac{\chi_i(\abs{ \a_i'\h}^2)}{\norm{\bm h}^2} \leq  0.25+\epsilon
\end{equation}
holds with probability at least $1-\exp(-c_1m\epsilon^2)$ for some universal constant $c_1$ if $\norm{\h}\leq \frac{1}{20}\norm{\x}$.

Thus we have proved that \eqref{eq:rrr} holds for a fixed $h$. Now we show that this claim about a fixed $\h$ can be extended to all $\h$. We prove this claim on the sphere $\mathcal S:=\dkh{\h:\norm{\h}=\norm{\x}/20},$ and then explain this remains true inside this sphere.

Let $\epsilon'= \frac{1}{20 }\epsilon\|\x\|$, and we construct a $\epsilon'$-net $\mathcal{N}_{\epsilon'}$ covering the sphere $\mathcal{S}$. The cardinality $\abs{\mathcal{N}_{\epsilon'}}\leq \kh{1+\frac{2}{\epsilon}}^n$. For any vector $\h \in \mathcal{S}$, there exists $\h_0\in \mathcal{N}_{\epsilon'}$ satisfying $\norm{\h-\h_0}\leq\epsilon'=\epsilon\norm{\h}$. For all points on the $\mathcal{N}_{\epsilon'}$, we have
\begin{equation}\label{eq:union  bound}
  \frac{1}{m} \sum_{i=1}^{m} \kh{\abs{\a_i'\h_0}^2  } \leq (0.31+\epsilon)\norm{\h_0}^2,\quad\forall \h_0\in\mathcal{N}_{\epsilon'}
\end{equation}
with probability at least $1-\kh{1+\frac{2}{\epsilon}}^2\exp(-c_1m\epsilon^2)$. By Lemma 1 and Lemma 2 in \cite{chen2015solving},
\begin{equation}\label{eq:chemlem1}
  \sum_{i=1}^m\frac{1}{m}\kh{\abs{\a_i'\h}^2-\abs{\a_i'\h_0}^2 }
  \leq c_2\norm{\h'\h-\h_0'\h}_F \leq 3 \norm{\h-\h_0}
\end{equation}
holds with probability at least $1-C\exp(-c_1m)$ as long as $m>c_0n$, with some universal constants $C,c_0,c_1,c_2>0$.
Then we have
\begin{equation}\label{eq:2eee}
  \begin{aligned}
    &\abs{\frac{1}{m}\sum_{i=1}^{m}\chi_i\kh{\abs{\a_i'\h}^2} - \frac{1}{m}\sum_{i=1}^{m}\chi_i\kh{\abs{\a_i'\h_0}^2}}\\
  \leq
    &\sum_{i=1}^{m}\frac1m \abs{ \chi_i\kh{\abs{\a_i'\h}^2}-\chi_i\kh{\abs{\a_i'\h_0}^2}}\\
  \leq
    &\frac{1}{\delta}
    \sum_{i=1}^m\frac{1}{m}\norm{\abs{\a_i'\h}^2-\abs{\a_i'\h_0}^2 }\qquad \text{($\chi_i(t)$ is $\frac1\delta$-Lipschitz)}\\
  \leq
    &c_2 \frac1\delta \norm{\h'\h-\h_0\h}_F
    \leq
    \frac3\delta \norm{\h-\h_0}\cdot\norm{\h}\leq \frac{3c_2\epsilon}{\delta}\norm{\h}^2
  \end{aligned}
\end{equation}
On the event that both \eqref{eq:union  bound} and \eqref{eq:2eee} hold, we have
\begin{equation}\label{eq:eq49}
  \frac1m \sum_{i=1}^{m}\chi_i(\abs{\a_i'\h}^2)\leq (0.25+\epsilon+3c_2\epsilon/\delta)\norm{\h}^2,
\end{equation}
for all $\norm{\h}$ with $\norm{\h}=\frac{1}{20}\norm{\x}$.

As for the situation when $\norm{\h}< \frac1{20}\norm{\x}$, find $\h=\frac{1}{w}\h'\in \mathcal{S}$ and $w\in(0,1)$. One can easily verify that
\begin{equation}\label{eq:chiii}
\begin{aligned}
   \chi_i(\abs{\a_i'\h}^2) &= \chi_i(\abs{\a_i'w\h}^2)\leq w^2 \chi_i(\abs{\a_i'\h}^2)\\
  &\leq
   (0.25+\epsilon+3c_2\epsilon/\delta)\norm{\h'}^2,
\end{aligned}
\end{equation}
on the same event that \eqref{eq:eq49} holds.

So far we have proved the \eqref{eq:rrr}. One can find that  $\E[\gamma_i]\leq 0.065$ when $r$ takes 1 and reproduce the proof of \eqref{eq:rrr} to prove \eqref{eq:rrrs}.
\end{proof}

\section{The proof of properties of SAF loss function}\label{sec:property}
Property \ref{propty:1} holds because
\begin{equation}\label{eq:prop1 ill}
  f(-x,y) =  \kh{((-x)^4+y^4)^{1/4} -2^{1/4}y} ((-x)^4+y^4)^{ -3/4}(-x)^{3} = -f(x,y).
\end{equation}

As for Property 2, we have
\begin{equation}\label{eq:192}
\begin{aligned}
  f(b_i+x,b_i) x &= \kh{((b_i+x)^4+b_i^4)^{1/4} -2^{1/4}b_i} ((b_i+x)^4+b_i^4)^{ -3/4}(b_i+x)^{3}x.
\end{aligned}
\end{equation}
$((b_i+x)^4+b_i^4)^{ -3/4}$ is always non-negative. $(b_i+x)^3\geq 0$ when $x\geq-b_i$. And it is easy to check that
$$\kh{((b_i+x)^4+b_i^4)^{1/4} -2^{1/4}b_i}x\geq0$$ for any $x$. Therefore the $f(b_i+x)x\geq 0$ holds for all $x\geq-b_i$. Because of Property 1, we have $$f(-b_i+x,b_i)x = f(b_i-x,b_i)x\geq0$$
when $x\leq b_i$. Hence Property 2 is proved.

It can be difficult to prove the Property 3 analytically. Here we demonstrate this property by directly plotting the graph of $g(x)= f( 1 +  x,1)x - 0.18x^2$ in Fig. \ref{fig:property2}.
\begin{figure}
  \centering
  \includegraphics[width=0.5\linewidth]{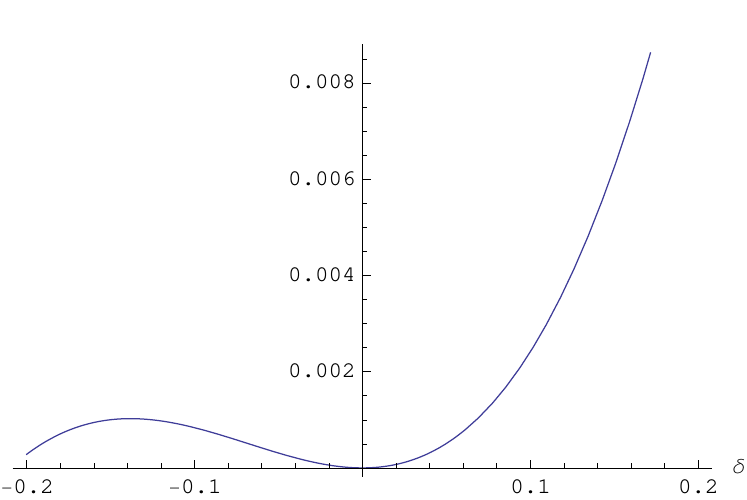}
  \caption{Plot of $g(x)$}\label{fig:property2}
\end{figure}

Now we focus on Property 4.
Because of Property 1, we need only prove the case $|f(b_i+\delta,b_i)/\delta|\leq 1$. Then Property 4 holds since
\begin{equation}\label{eq:tempppp}
 \begin{aligned}
  |f(b_i+\delta,b_i)|&= \abs{((b_i+\delta)^4+b_i^4)^{1/4} -2^{1/4}b_i} \frac{\abs{b_i+\delta}^3}{((b_i+\delta)^4+b_i^4)^{ 3/4}}\\
  &= \abs{((b_i+\delta)^4+b_i^4)^{1/4} -(b_i^4+b_i^4)^{1/4}} \frac{\abs{b_i+\delta}^3}{((b_i+\delta)^4+b_i^4)^{ 3/4}}\\
  &\leq \abs{\abs{b_i+\delta} - b_i} \frac{\abs{b_i+\delta}^3}{((b_i+\delta)^4+b_i^4)^{ 3/4}}\\
  &\leq \abs{\delta}.
  \end{aligned}
\end{equation}

\end{document}